\def\thm@space@setup{\thm@preskip=2pt
	\thm@postskip=2pt \itshape}
\newcommand{\Qian}[1]{{#1}}
\newtheoremstyle{newstyle}      
{} %Aboveskip 
{} %Below skip
{\mdseries} %Body font e.g.\mdseries,\bfseries,\scshape,\itshape
{} %Indent
{\bfseries} %Head font e.g.\bfseries,\scshape,\itshape
{.} %Punctuation afer theorem header
{ } %Space after theorem header
{} %Heading
\theoremstyle{newstyle}
\newtheorem{theorem}{Theorem}
\newtheorem{lemma}{Lemma}
\newtheorem{corollary}{Corollary}
\theoremstyle{definition}
\newtheorem{definition}{Definition}
\theoremstyle{remark}
\newtheorem{remark}{Remark}
	\title{\huge{Straggler Mitigation	in Distributed Matrix Multiplication: Fundamental Limits and Optimal Coding}}
		\author{Qian~Yu$^{*}$, Mohammad~Ali~Maddah-Ali$^{\dagger}$, and A.~Salman~Avestimehr$^{*}$\\
		$^{*}$ Department of Electrical Engineering, University of Southern California, Los Angeles, CA, USA \\ 
		$^{\dagger}$ Nokia Bell Labs, Holmdel, NJ, USA\\
		\thanks{Manuscript received January 23, 2018; revised May 16, 2019; accepted December 12, 2019. This material is based upon work supported by Defense Advanced Research Projects Agency (DARPA) under Contract No. HR001117C0053. The views, opinions, and/or findings expressed are those of the author(s) and should not be interpreted as representing the official views or policies of the Department of Defense or the U.S. Government. This work is also in part supported by ONR award N000141612189 and NSF Grants CCF-1703575 and NeTS-1419632. A shorter version of this paper was presented at ISIT 2018 \cite{8437563}.}
			\thanks{Q.~Yu and A.S.~Avestimehr are with the Department of Electrical Engineering, University of Southern California, Los Angeles, CA, 90089, USA (e-mail:  qyu880@usc.edu; avestimehr@ee.usc.edu).}
\thanks{M. A. Maddah-Ali is with Nokia Bell Labs (e-mail: Mohammad.maddah-ali@nokia-bell-labs.com).}
\thanks{
Communicated by K. Narayanan, Associate Editor for Coding Techniques. }
\thanks{Copyright (c) 2020 IEEE. Personal use is permitted, but republication/redistribution requires IEEE permission.}
	}
\begin{document}

\maketitle

\begin{abstract}
We consider the problem of massive matrix multiplication, which underlies many data analytic applications, in a large-scale distributed system comprising a group of worker nodes.
We target the stragglers' delay performance bottleneck, which is due to the unpredictable latency in waiting for slowest nodes (or stragglers) to finish their tasks.
We propose a novel coding strategy, named \emph{entangled polynomial code}, for designing the intermediate computations at the worker nodes in order to minimize the recovery threshold (i.e., the number of workers that we need to wait for in order to compute the final output).
We demonstrate the optimality of entangled polynomial code in several cases, and show that it provides orderwise improvement over the conventional schemes for straggler mitigation.
Furthermore, we characterize the optimal recovery threshold among all linear coding strategies within a factor of $2$ using \emph{bilinear complexity}, by developing an improved version of the entangled polynomial code. {In particular, while evaluating bilinear complexity is a well-known challenging problem, we show that optimal recovery threshold for linear coding strategies can be approximated within a factor of $2$ of this fundamental quantity.}
\Qian{On the other hand, the improved version of the entangled polynomial code enables further and orderwise reduction in the recovery threshold, compared to its basic version.  
}
Finally, we show that the techniques developed in this paper can also be extended to several other problems such as coded convolution and fault-tolerant computing, leading to tight characterizations.

\end{abstract}

	\section{Introduction}

	% 1. matrix mul is important
	Matrix multiplication is one of the key operations underlying many data analytics applications in various fields such as machine learning, scientific computing, and graph processing. Many such applications require  processing terabytes or even petabytes of data, which needs massive computation and storage resources that cannot be provided by a single machine. Hence, deploying  matrix computation tasks on large-scale distributed systems has received wide interests \cite{Cannon:1969:CCI:905686,CPE:CPE4330060702,CPE:CPE250,Solomonik:2011:CPM:2033408.2033420}.
	
	% 2. straggler issues 3. coded computing
There is, however, a major performance bottleneck that arises as we scale out computations across many distributed nodes: stragglers' delay bottleneck, which is due to the unpredictable latency in waiting for slowest nodes (or stragglers) to finish their tasks~\cite{dean2013tail}. The conventional approach for mitigating straggler effects  involves injecting some form of ``computation redundancy" such as repetition (e.g.,~\cite{zaharia2008improving}). Interestingly, it has been shown recently that \emph{coding theoretic} concepts can also play a transformational role in this problem, by efficiently creating ``computational redundancy'' to mitigate the stragglers~\cite{lee2015speeding,dutta2016short,tandon2016gradient,NIPS2017_7027,li2016unified, 7901473, 8663353}.

			\begin{figure}[htbp]
			%\vspace{-3mm}
			\centering
			\includegraphics[width=0.95\linewidth]{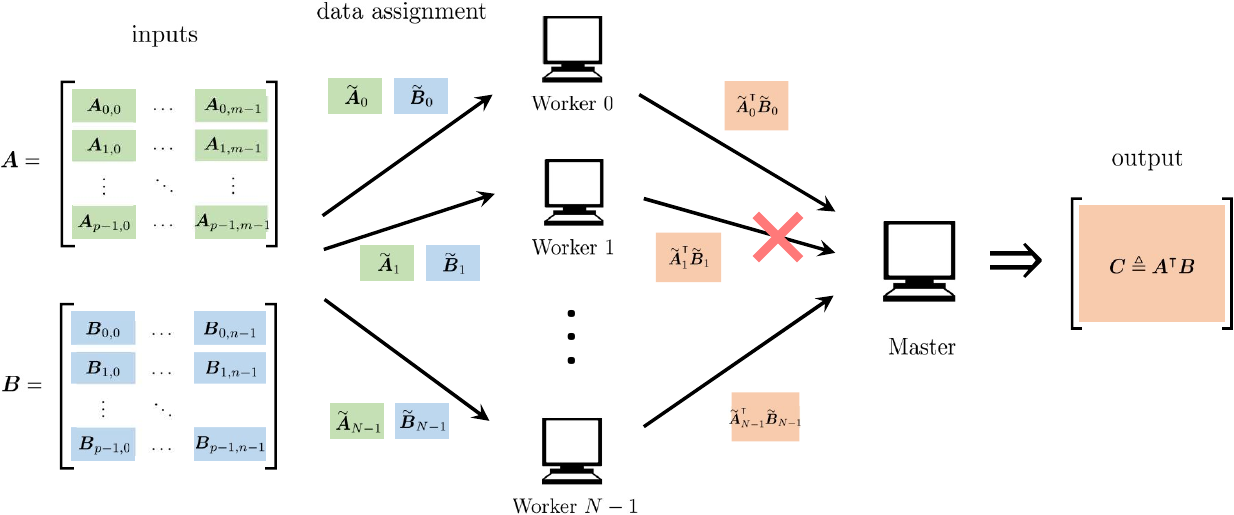}
			\caption{Overview of the distributed matrix multiplication problem. Each worker computes the product of the two stored encoded  submatrices ($\tilde{A}_i$ and $\tilde{B}_i$)
and returns the result to the master. By carefully designing the coding strategy, the master can decode the multiplication result of the input matrices from a subset of workers, without having to wait for stragglers (worker 1 in this example).
}
		%	\vspace{-3mm}
			\label{fig:sys}
		\end{figure}

In this paper, we consider a general formulation of distributed matrix multiplication, study information-theoretic limits, and develop optimal coding designs for straggler effect mitigation.  We consider a standard master-worker distributed setting, where a group of $N$ workers aim to collaboratively compute the product of two large matrices $A$ and $B$, and return the result $C=A ^\intercal B$ to the master. As shown in  Figure \ref{fig:sys}, the two input matrices are partitioned (arbitrarily) into $p$-by-$m$ and $p$-by-$n$ blocks of submatrices respectively, where all submatrices within the same input are of equal size. 
Each worker	has a local memory that can be used to store any coded function of each matrix, denoted by $ \tilde{A}_i$'s and $ \tilde{B}_i$'s, each with a size equal to that of the corresponding submatrices.  
	The workers then multiply their two stored (coded) submatrices and return the results to the master. 
	By carefully designing the coding functions, the master can decode the final result without having to wait for the slowest workers, which provides robustness against stragglers.

%Note that by selecting different values of parameters $p$, $m$, and $n$, we allow flexible partitioning of input matrices into smaller blocks, which in return enables different utilization of system resources. In particular, the amount of storage that each worker node allocates to matrices $A$ and $B$, respectively correspond to $\frac{1}{pm}$ and $\frac{1}{pn}$. Furthermore, the amount of communication from each worker node to the master (normalized by the size of $C$) corresponds to $\frac{1}{mn}$ (since each worker node sends back a submatrix of size $\frac{1}{mn}$ fraction of $C$). Hence, considering the system constraints on available storage and communication resources, one can choose the right values for the parameters $p$, $m$, and $n$.  %\footnote{}
%We aim to find optimal coding and computation designs for \emph{any} choice of parameters $p$, $m$ and $n$, in order to provide optimum straggler effect mitigation for various situations.

%short
Note that by allowing different values of parameters $p$, $m$, and $n$, we allow flexible partitioning of input matrices, which in return enables different utilization of system resources (\Qian{e.g.}, the required amount of storage at each worker and the amount of communication from worker to master).\footnote{\Qian{A more detailed discussion is provided in Remark \ref{remark:trade}}} Hence, considering the system constraints on available storage and communication resources, one can choose $p$, $m$, and $n$ accordingly.  
We aim to find optimal coding and computation designs for \emph{any} choice of parameters $p$, $m$ and $n$, to provide optimum straggler effect mitigation for various situations.

With a careful design of the coded submatrices $\tilde{A}_i$ and $\tilde{B}_i$ at each worker, the master only needs results from the fastest workers before it can recover the final output, which effectively mitigates straggler issues. To measure the robustness against straggler effects of a given coding strategy, we use the metric \emph{recovery threshold}, defined previously in~\cite{NIPS2017_7027}, which is equal to the minimum number of workers that the master needs to wait for in order to compute the output $C$. Given this terminology, our main problem is as follows: What is the minimum possible recovery threshold and the corresponding coding scheme, for any choice of parameters $p$, $m$, $n$, and $N$?

%todo

We propose a novel coding technique, referred to as \emph{entangled polynomial code}, which achieves the recovery threshold of $pmn+p-1$ for all possible parameter values. %of $p$, $m$, $n$, and $N$. %Note that the number of submatrices in  $A$ and $B$ are $mp$ and  $np$, respectively.
The construction of the entangled polynomial code is based on the observation that when multiplying an $m$-by-$p$ matrix and a $p$-by-$n$ matrix, we essentially evaluate a subspace of bilinear functions, spanned by the pairwise product of the elements from the two matrices.  Although potentially there are a total of $p^2mn$ pairs of elements, at most $pmn$ pairs are directly related to the matrix product, which is an order of $p$ less.
The particular structure of the proposed code entangles the input matrices to the output such that the system almost avoids unnecessary multiplications and achieves a recovery threshold in the order of $pmn$, 
%todo
while allowing robust straggler mitigation for arbitrarily large systems. This allows orderwise improvement upon conventional uncoded approaches, random linear codes, and MDS-coding type approaches for straggler mitigation~\cite{lee2015speeding, dutta2016short}.
%ior approaches such as random linear code, uncoded repetition, and MDS type coding for straggler mitigation\cite{lee2015speeding, dutta2016short}.

%This allows orderwise improvement upon conventional uncoded approaches and MDS-coding type approaches for straggler mitigation~\cite{lee2015speeding, dutta2016short}.

Entangled polynomial code generalizes our previously proposed polynomial code for distributed matrix multiplication~\cite{NIPS2017_7027}, which was designed for the special case of $p=1$ (i.e., allowing only column-wise partitioning of matrices $A$ and $B$). 
However, as we move to arbitrary partitioning of the input matrices (i.e., arbitrary values of $m$, $n$, and $p$), a key challenge is to design the coding strategy at each worker such that its computation best \emph{aligns} with the final computation $C$. 
In particular, to recover the product $C$, the master needs $mn$ components that each involve summing $p$ products of submatrices of $A$ and $B$. 
Entangled polynomial code effectively aligns the workers' computations with the master's need, which is its key distinguishing feature from polynomial code. % and allows order-wise improvement upon other works.

We show that entangled polynomial code achieves the optimal recovery threshold among all linear coding strategies in the cases of $m=1$ or $n=1$. It also achieves the optimal recovery threshold among all possible schemes within a factor of $2$ when $m=1$ or $n=1$.

Furthermore, for \emph{all} partitionings of input matrices (i.e., all values of $p$, $m$, $n$, and $N$),  we characterize the optimal recovery threshold among all linear coding strategies  within a factor of $2$ of $R(p,m,n)$, which denotes the \emph{bilinear complexity} of multiplying an $m$-by-$p$ matrix to a $p$-by-$n$ matrix (see Definition~\ref{def:bi} later in the paper). While evaluating bilinear complexity is a well-known challenging problem in the computer science literature (see~\cite{gs005}), we show that the optimal recovery threshold for linear coding strategies can be approximated within a factor of $2$ of this fundamental quantity.

%\textcolor{red}{New paragraphs below}
We establish this result by developing an {improved version of the entangled polynomial code}, 
which achieves a recovery threshold of $2R(p,m,n)-1$. Specifically, this coding construction exploits the fact that 
any matrix multiplication problem can be converted into a problem of computing the element-wise product of two arrays of length $R(p,m,n)$. Then we show that this augmented computing task can be optimally handled using a variation of the entangled polynomial code, and the corresponding optimal code achieves the recovery threshold $2R(p,m,n)-1$.      

Finally, we show that the coding construction and converse bounding techniques developed for proving the above results can also be directly extended to several other problems.
For example, we show that the converse bounding technique can be extended to the problem of coded convolution, which was originally considered in \cite{8006960}. We prove that the state-of-the-art scheme we proposed in \cite{NIPS2017_7027} for this problem is in fact optimal among all linear coding schemes. These techniques can also be applied in the context of fault-tolerant computing, which was first studied in \cite{FTC} for matrix multiplication. We provide tight characterizations on the maximum number of detectable or correctable errors.

We note that recently, another computation design named PolyDot was also proposed for distributed matrix multiplication, achieving a recovery threshold of $m^2(2p-1)$ for $m=n$ \Qian{\cite{8262882}}. Both entangled polynomial code and PolyDot are developed by extending the polynomial codes proposed in \cite{NIPS2017_7027}  to allow arbitrary partitioning of input matrices. Compared with PolyDot, entangled polynomial code achieves a {strictly} smaller recovery threshold of $pmn+p-1$, by a factor of $2$. More importantly,  in this paper we have developed a converse bounding technique that proves the optimality of the entangled polynomial code in several cases. We have also proposed an improved version of the entangled polynomial code and characterized the optimum recovery threshold within a factor of 2 for all parameter values.
	\label{sec:intro}

	\section{System Model and Problem Formulation}\label{sec:sys}

		We consider a problem of matrix multiplication with two input matrices $ A \in \mathbb{F}^{s\times r}$ and $ B \in \mathbb{F}^{s\times t}$, for some integers $r$, $s$, $t$ and a sufficiently large field $\mathbb{F}$.\footnote{Here we consider the general class of fields, which includes finite fields, the field of real numbers, and the field of complex numbers.} We are interested in computing the product  $ C \triangleq A ^\intercal B $ in a distributed computing environment with a master node and $N$ worker nodes, where
		each worker can store $\frac{1}{pm}$ fraction of $A$ and $\frac{1}{pn}$ fraction of $B$, based on some integer parameters $p$, $m$, and $n$ (see Fig. \ref{fig:sys}). %Recall that blah, in part blah TODO

		Specifically, each worker $i$ can store two coded matrices $\tilde{A}_i\in\mathbb{F}^{\frac{s}{p}\times \frac{r}{m}}$ and $\tilde{B}_i\in\mathbb{F}^{\frac{s}{p}\times \frac{t}{n}}$, computed based on %\emph{arbitrary functions} of
		$ A $ and $ B $ respectively. 
		Each worker can compute the product  $\tilde{C}_i\triangleq\tilde{A}_i^\intercal\tilde{B}_i$, and return it to the master. The master waits only for the results from a subset of workers before proceeding to recover the final output $ C $ using certain \emph{decoding functions}.

    	Given the above system model, we formulate the \emph{distributed matrix multiplication problem} based on the following terminology: We define the \emph{computation strategy} as a collection of $2N$ \emph{encoding functions}, denoted by
    	\begin{align}
    	    {\boldsymbol{f}}=(f_0,{f}_1,...,{f}_{N-1}), \ \ \ \ \ \ \ \  {\boldsymbol{g}}=(g_0,{g}_1,...,{g}_{N-1}),
    	\end{align}
    	 that are used by the workers to compute 
    	each $\tilde{A}_i$ and $\tilde{B}_i$,
    	and a class of \emph{decoding functions},  denoted by
    	\begin{align}
    	    {\boldsymbol{d}}=\{d_{\mathcal{{K}}}\}_{\mathcal{{K}}\subseteq\{0,1,...,N-1\}}, 
    	\end{align}
    	that are used by the master to recover $C$ given results from any subset $\mathcal{{K}}$ of the workers. 
    Each worker $i$ stores matrices % $\tilde{A}_i$ and $\tilde{B}_i$ based on the following equations
    	\begin{align}
    	\tilde{A}_i={f}_i( A ), \ \ \ \ \ \ \ \  \ \ \ \ \ \ 
	    \tilde{B}_i={g}_i( B ), 
    	\end{align}
    	and the master can compute an estimate $\hat{C}$ of matrix $C$ using results from a subset $\mathcal{K}$ of the workers by computing
    	\begin{align}
    	\hat{C}=d_{\mathcal{{K}}}\left(\{\tilde{C}_i\}_{i\in\mathcal{K}}\right).
    	\end{align}

    	For any integer $k$, we say a computation strategy is \emph{$k$-recoverable} if the master can recover $ C $ given the computing results from \emph{any} $k$ workers. Specifically, a computation strategy is \emph{$k$-recoverable} if for any subset $\mathcal{K}$ of $k$ users, the final output $\hat{C}$ from the master equals $C$ for all possible input values.
    	We define the \emph{recovery threshold} of a computation strategy, denoted by $K( 
    	{\boldsymbol{f}}
    	,{\boldsymbol{g}}, {\boldsymbol{d}})$, as the minimum integer $k$ such that computation strategy  $({\boldsymbol{f}},{\boldsymbol{g}}, {\boldsymbol{d}})$ is $k$-recoverable. 
    
   % \textcolor{red}{ make sure that $K_{linear}$ require linear decoding} 
    	We aim to find a computation strategy that requires the minimum possible recovery threshold and allows efficient decoding at the master. Among all possible computation strategies, we are particularly interested in a certain class of designs, referred to as the \emph{linear codes} and defined as follows:
	    	\begin{definition}
    	For a distributed matrix multiplication problem of computing $A ^\intercal B $ using $N$ workers, we say a computation strategy is a \emph{linear code} given parameters $p$, $m$, and $n$, if there is a partitioning of the input matrices $A$ and $B$ where each matrix is divided into the following submatrices of equal sizes
    		\begin{align}
	   A = &
 \begin{bmatrix}
  A_{0,0} & A_{0,1} & \cdots & A_{0,m-1} \\
  A_{1,0} & A_{1,1} & \cdots & A_{1,m-1} \\
  \vdots  & \vdots  & \ddots & \vdots  \\
  A_{p-1,0} & A_{p-1,1} & \cdots & A_{p-1,m-1} 
 \end{bmatrix}, \label{eq:a}\\
	    B = &
 \begin{bmatrix}
  B_{0,0} & B_{0,1} & \cdots & B_{0,n-1} \\
  B_{1,0} & B_{1,1} & \cdots & B_{1,n-1} \\
  \vdots  & \vdots  & \ddots & \vdots  \\
  B_{p-1,0} & B_{p-1,1} & \cdots & B_{p-1,n-1} 
 \end{bmatrix},\label{eq:b}
	\end{align}
		such that the encoding functions of each worker $i$ can be written as 
    		\begin{align}
    	&\tilde{A}_i=\sum_{j,k} A_{j,k} a_{ijk},  \ \ \ \ \ \ \ \ 
	    \tilde{B}_i=\sum_{j,k} B_{j,k} b_{ijk}, 
    	\end{align}
    	for some tensors $a$ and $b$, %$a\in\mathbb{F}^{N\times p\times m}$ and $b\in \mathbb{F}^{N\times p\times n}$
    	and the decoding function given each subset $\mathcal{K}$ can be written as\footnote{Here $\hat{C}_{j,k}$ denotes the master's estimate of the subblock of $C$ that corresponds to $\sum_{\ell}A_{\ell,j}B_{\ell,k} $.}
    	 	\begin{align}
        	&\hat{C}_{j,k}=\sum_{i\in\mathcal{K}}\tilde{C}_{i}c_{ijk},
		\end{align}
    	for some tensor $c$. For brevity, we denote the set of linear codes as $\mathcal{L}$.
    	\end{definition}
    	
       The major advantage of linear codes is that they guarantee that both the encoding and the decoding complexities of the scheme scale linearly with respect to the size of the input matrices. Furthermore, as we have proved in \cite{NIPS2017_7027}, linear codes are optimal for $p=1$.
    	  Given the above terminology, we define the following concept.
    		\begin{definition}
    	For a distributed matrix multiplication problem of computing $A ^\intercal B $ using $N$ workers, we define the \emph{optimum linear recovery threshold}  as a function of the problem parameters $p$, $m$, $n$, and $N$, denoted by $K^*_{\textup{linear}}$, as the minimum achievable recovery threshold among all linear codes. %for any sufficiently large field $\mathbb{F}$. 
    	Specifically,
    	\begin{align}
    	    K^*_{\textup{linear}}\triangleq  \min_{({\boldsymbol{f}},{\boldsymbol{g}}, {\boldsymbol{d}})\in\mathcal{L}} K({\boldsymbol{f}},{\boldsymbol{g}}, {\boldsymbol{d}}).
    	\end{align}
    	\end{definition}
    	
%    	The goal of the problem is to characterize the optimum linear recovery threshold $K^*_{\textup{linear}}$, as well as to find a computation strategy that achieves such optimum threshold. Note that if the number of workers $N$ is too small, it is obvious that no valid computation strategy exists even without requiring straggler tolerance. Hence, in the rest of the paper, we only consider the meaningful case where $N$ is large enough that it supports at least one valid computation strategy. Equivalently,  $N$ has to be at least the bilinear complexity of multiplying an $m$-by-$p$ matrix and a $p$-by-$n$ matrix.
 
     	Our goal is to characterize the optimum linear recovery threshold $K^*_{\textup{linear}}$, and to find  computation strategies to achieve such optimum threshold. Note that if the number of workers $N$ is too small, obviously no valid computation strategy exists even without requiring straggler tolerance. Hence, in the rest of the paper, we only consider the meaningful case where $N$ is large enough to support at least one valid computation strategy. %\footnote{More concretely, we show that the required minimum number of workers is exactly given by a fundamental quantity, }
     	\Qian{More concretely, we show that the minimum possible number of workers is given by a fundamental quantity: the bilinear complexity of multiplying an $m$-by-$p$ matrix and a $p$-by-$n$ matrix, which is formally introduced in Section \ref{sec:res}}.
     	
     	%has to be at least the bilinear complexity of multiplying an $m$-by-$p$ matrix and a $p$-by-$n$ matrix, \Qian{denoted by $R(p,m,n)$, which is formally introduced in Section \ref{sec:res}}.

    We are also interested in characterizing the minimum recovery threshold achievable using general coding strategies (including non-linear codes). Similar to \cite{NIPS2017_7027}, we define this value as the \emph{optimum recovery threshold} and denote it by $K^*$. %\Qian{Specifically, among all possible computation strategies, } 
%    \footnote{Note that we consider the most general model and do not impose any constraints on the encoding and decoding functions. However, any good decoding function should have relatively low computation complexity. }

 %   \subsection{Connection to related works and main contributions}
    
 %   Finding coding design in [][], where column-wise partitioning. 
    
 %   Roughly fixing the computational complexity at each worker (specifically, fixing $pmn$ )

%1. strictly improves or generalizes all prior and concurrent works

%2. converse, also convolution

%3. improved version, 

%4. establish a connection hamming dist
	
	\section{Main Results}\label{sec:res}

	We state our main results in the following theorems:
	\begin{theorem}\label{th:scheme}
		For a distributed matrix multiplication problem of computing $A ^\intercal B $ using $N$ workers, with parameters $p$, $m$, and $n$, the following recovery threshold can be achieved by a linear code, referred to as the \emph{entangled polynomial code}.\footnote{For $N< pmn+p-1$, we define $K_{\textup{entangled-poly}}\triangleq N$.  }
		\begin{align}
		 K_{\textup{entangled-poly}}\triangleq pmn+p-1.
		\end{align}
		%\Rem{Furthermore,  the entangled polynomial code can be decoded at the master node with at most  the complexity of polynomial interpolation given $pmn+p-1$ points. }
	\end{theorem}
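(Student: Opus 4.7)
The plan is to exhibit an explicit linear code based on polynomial evaluation and then to recover the desired blocks of $C$ via polynomial interpolation. Assign each worker $i$ a distinct element $x_i \in \mathbb{F}$ (feasible since $\mathbb{F}$ is taken to be sufficiently large), and define matrix-valued polynomials
\begin{align*}
A(x) &= \sum_{\ell=0}^{p-1}\sum_{j=0}^{m-1} A_{\ell,j}\, x^{\ell+pj}, \\
B(x) &= \sum_{\ell=0}^{p-1}\sum_{k=0}^{n-1} B_{\ell,k}\, x^{(p-1-\ell)+pmk}.
\end{align*}
Worker $i$ stores $\tilde{A}_i = A(x_i)$ and $\tilde{B}_i = B(x_i)$, then returns $\tilde{C}_i = A(x_i)^\intercal B(x_i)$ to the master.

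Next, I would analyze the matrix polynomial $h(x) := A(x)^\intercal B(x)$ and argue that the desired blocks $C_{j,k} = \sum_{\ell} A_{\ell,j}^\intercal B_{\ell,k}$ appear as clean coefficients. A pair $(A_{\ell,j}, B_{\ell',k})$ contributes to the coefficient of $x^{\ell+pj+(p-1-\ell')+pmk}$. The asymmetric ``reflection'' $p-1-\ell$ on the $B$-side is chosen precisely so that when $\ell=\ell'$ the exponent collapses to $p-1+pj+pmk$, independent of $\ell$; hence every such pairing contributes to the coefficient of $x^{p-1+pj_0+pmk_0}$, summing to exactly $C_{j_0,k_0}$. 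To rule out contamination from off-diagonal terms, I would verify that $(\ell-\ell')+p(j-j_0)+pm(k-k_0)=0$ with $|\ell-\ell'|\le p-1$ and $|j-j_0|\le m-1$ forces, by a mixed-radix magnitude/divisibility argument, first $k=k_0$ (since $|(\ell-\ell')+p(j-j_0)|<pm$), then $j=j_0$ (since $|\ell-\ell'|<p$), and finally $\ell=\ell'$.

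The third step is a short degree count combined with Lagrange interpolation. Since $\deg A = pm-1$ and $\deg B = pmn-pm+p-1$, we have $\deg h \le pmn+p-2$, so $h(x)$ has at most $pmn+p-1$ unknown coefficients. Given evaluations $h(x_i)=\tilde{C}_i$ at any $pmn+p-1$ distinct points, the associated Vandermonde system is invertible and $h$ is uniquely determined; the master then reads off the $pmn$ coefficients at the target powers $\{p-1+pj+pmk\}$ to assemble $C$. This yields $k$-recoverability with $k=pmn+p-1$, establishing the theorem and giving linear-time encoding/decoding.

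The main obstacle --- and creative core of the argument --- is choosing the exponent patterns $\ell+pj$ and $(p-1-\ell)+pmk$ so that three requirements are met simultaneously: (i) the pairwise exponent sum is constant along the aligned index $\ell$, so that each $C_{j,k}$ emerges as a single coefficient rather than being smeared across monomials; (ii) distinct target pairs $(j,k)$ produce distinct power indices, so the target coefficients do not collide; and (iii) the total degree of $h$ is pinned at $pmn+p-2$, which is what ultimately controls the recovery threshold. Verifying (i)--(iii) reduces to a mixed-radix digit argument, and (iii) is precisely where the ``$+p-1$'' overhead above the naive count of $pmn$ output blocks enters --- the $p-1$ ``interference'' coefficients of $h$ are simply discarded by the master.
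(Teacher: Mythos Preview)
Your proposal is correct and follows essentially the same approach as the paper: the paper's entangled polynomial code is precisely the $(\alpha,\beta,\theta)=(1,p,pm)$ assignment you wrote down, with the same reflected exponent $p-1-\ell$ on the $B$-side, the same degree count $pmn+p-2$, and recovery via polynomial interpolation. If anything, your mixed-radix argument ruling out contamination of the target coefficients is more explicit than the paper's treatment, which simply asserts that each $C_{k,k'}$ is the coefficient of degree $p-1+kp+k'pm$; one small quibble is that ``linear-time decoding'' should really be ``almost-linear'' (the paper quotes $O(prt\log^2(pmn)\log\log(pmn))$ via fast polynomial interpolation).
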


	\begin{remark}
	\label{remark:challenge}
	Compared to some other possible approaches, our proposed entangled polynomial code provides orderwise improvement in the recovery threshold (see Fig. \ref{fig:comp}). One conventional approach (referred to as the \emph{uncoded repetition scheme}) is to let each worker store and multiply uncoded submatrices. %
	With the additional computation redundancy through repetition, the scheme can robustly tolerate some stragglers. However, its recovery threshold, {$K_{\textup{uncoded}}\triangleq N-\lfloor\frac{N}{pmn}\rfloor+1$,} grows linearly with respect to the number of workers. Another approach is to let each worker store two random linear combinations of the input submatrices (referred to as the \emph{random linear code}). With high probability, this achieves recovery threshold $K_{\textup{RL}}\triangleq p^2mn$,\footnote{\Qian{Intuitively, because each worker returns a random linear combination of all $p^2mn$ possible pairwise products, with high probability, the final output can be recovered from any subset of $p^2mn$ results. }} which does not scale with $N$. However, to calculate $C$, we need the result of {at most} $pmn$ sub-matrix multiplications. Indeed, the lack of structure in the random coding forces the system to wait for $p$ times more than what is essentially needed. 
    %Perhaps
    One  surprising aspect of the proposed  entangled polynomial code is that, due to its particular structure which aligns the workers' computations with the master's need, it avoids  unnecessary multiplications of submatrices. As a result, it achieves a recovery threshold that does not scale with $N$, and is orderwise smaller than that of the random linear code. Furthermore,  it allows efficient decoding at the master, which requires at most an almost linear complexity.
	\end{remark}

			\begin{figure}[htbp]
			\vspace{-3mm}
			\centering
			\includegraphics[width=0.8\linewidth]{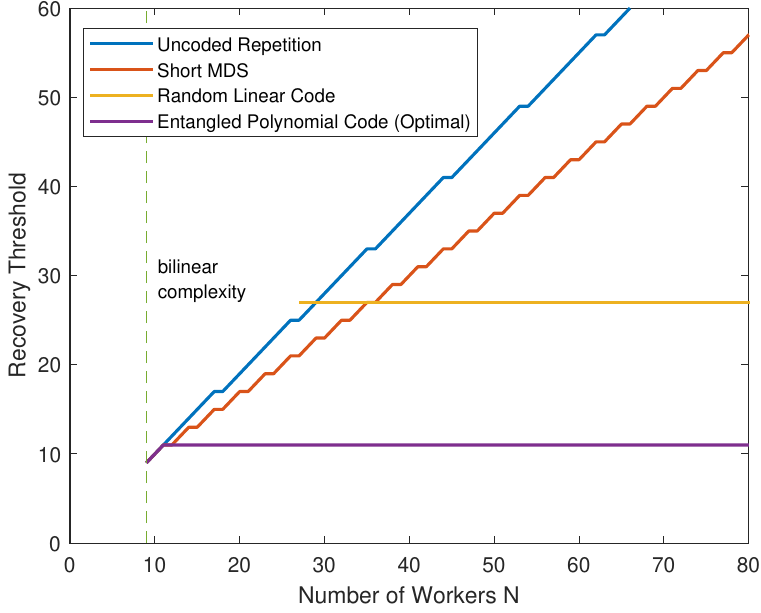}
			\caption{Comparison of the recovery thresholds achieved by the uncoded repetition scheme, the random linear code, the short-MDS (or short-dot) \cite{lee2015speeding, dutta2016short} and our proposed entangled polynomial code, given problem parameters $p=m=3$, $n=1$. %\footnote{In this scenario, as}
			The entangled polynomial code orderwise improves upon all other approaches. It also achieves the optimum linear recovery threshold in this scenario.}
			\vspace{-3mm}
			\label{fig:comp}
		\end{figure}
	
	\begin{remark} 
    There have been several works in prior literature investigating the $p=1$ case \cite{lee2015speeding,high_mul,NIPS2017_7027}. For this special case, the entangled polynomial code reduces to our previously proposed polynomial code, which achieves the optimum recovery threshold $mn$ and orderwise improves upon other designs.   
    On the other hand, there has been some investigation on matrix-by-vector type multiplication \cite{lee2015speeding, dutta2016short}, which can be viewed as the special case of	$m=1$ or $n=1$ in our proposed problem. The short-MDS code (or short-dot) has been proposed, achieving a recovery threshold of $N-\lfloor \frac{N}{p}\rfloor+m$, which scales linearly with $N$. Our proposed entangled polynomial code also strictly and orderwise improves upon that (see Fig. \ref{fig:comp}).
    \end{remark}
	
	\begin{remark}\label{remark:trade}
\Qian{By selecting different values of parameters $p$, $m$, and $n$, the entangled polynomial code enables different utilization of the system resources, which allows for balancing the costs due to storage and communication.  
%In particular, fixing the sizes of input matrices ($ A \in \mathbb{F}^{s\times r}$ and $ B \in \mathbb{F}^{s\times t}$), a distributed implementation with parameters $p$, $m$, and $n$ requires:
In particular, one can show that a distributed implementation for multiplying $ A^\intercal \in \mathbb{F}^{r\times s}$ and $ B \in \mathbb{F}^{s\times t}$ with parameters $p$, $m$, and $n$ requires:
	\begin{itemize}
	    \item Computation load at each worker (normalized by the cost of a single field operation):  $O(\frac{srt}{pmn})$,
	    \item Communication required from each worker (normalized by the size of $C$): $L\triangleq \frac{1}{mn}$,
	    \item Storage allocated for storing each coded matrix (normalized by the sizes of $A$, $B$, respectively): $\mu_A\triangleq \frac{1}{pm}$, $\mu_B\triangleq \frac{1}{pn}$. 
	   	\end{itemize}
If we roughly fix the computation load (specifically, fixing $pmn$ for the cubic matrix multiplication algorithm), the computing scheme requires the following trade-off between storage and communication:
\begin{align}
    L\mu_A\mu_B \sim \textup{constant}.
\end{align}
By designing the values of $p$, $m$, and $n$, we can operate at different locations on this trade-off to account for the system's requirement\footnote{
\Qian{For example, letting $p=1$ minimizes the communication load $L$, and letting $n=1$ or $m=1$ minimizes the storage cost for storing matrix $A$ or matrix $B$, respectively. Our proposed entangled polynomial code achieves the optimum linear recovery threshold in all these cases. More generally, adjusting the value of $p$ trades communication by storage; then adjusting the ratio between $m$ and $n$ allows for minimizing the overall storage cost, to account for the scenario where the sizes of input matrices are unbalanced.
Finally, by scaling $p$, $m$, and $n$ without taking the computational constraint into account, we enable the flexibility in terms of level of distribution. 
}}, while the entangled polynomial code maintains almost the same recovery threshold.}

	\end{remark}

    Our second result is the optimality of the entangled polynomial code when $m=1$ or $n=1$. Specifically, we prove that entangled polynomial code is optimal in this scenario among all linear codes. Furthermore, if the base field $\mathbb{F}$ is finite, it also achieves the optimum recovery threshold $K^*$ within a factor of $2$, with non-linear coding strategies taken into account.
	
		\begin{theorem}\label{th:li}
		For a distributed matrix multiplication problem of computing $A ^\intercal B $ using $N$ workers,  with parameters $p$, $m$, and $n$, if $m=1$ or $n=1$, we have
		\begin{align}\label{eq:tight}
		 K^*_{\textup{linear}}&= K_{\textup{entangled-poly}}.
		 \end{align}
		 Moreover, if the base field $\mathbb{F}$ is finite,
		% \vspace{-10pt}
		 \begin{align}\label{bound:it}
		 \frac{1}{2}K_{\textup{entangled-poly}}<&K^*\leq K_{\textup{entangled-poly}}.
		 \end{align}
		 \end{theorem}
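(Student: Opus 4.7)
The upper bounds in \eqref{eq:tight} and \eqref{bound:it} are immediate from Theorem \ref{th:scheme}, which exhibits a linear code achieving recovery threshold $K_{\textup{entangled-poly}}$, so $K^* \leq K^*_{\textup{linear}} \leq K_{\textup{entangled-poly}}$. The bulk of the work is to establish the two matching lower bounds.

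For the linear converse in \eqref{eq:tight}, by the symmetry $A \leftrightarrow B$ it suffices to treat $n = 1$. The plan is to recast an arbitrary linear code as an assignment of rank-1 matrices $v_i \triangleq a_i b_i^\intercal$ (with $a_i \in \mathbb{F}^{pm}$ and $b_i \in \mathbb{F}^p$) to the workers, and to observe that $K$-recoverability is equivalent to the span of every $K$-subset of the $v_i$'s containing the $m$-dimensional ``block-diagonal'' target subspace $W \triangleq \textup{span}\{T_0, \ldots, T_{m-1}\} \subseteq \mathbb{F}^{pm \times p}$, where $T_k$ is the $pm \times p$ matrix whose $k$-th $p \times p$ block is $I_p$ and all other blocks are zero. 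I would then show that no such configuration exists when $K \leq pm + p - 2$. The key structural observation is that each $b_i$ lies in $\mathbb{F}^p$, so any $p+1$ of them satisfy a nontrivial linear relation; iterating this together with a Vandermonde/MDS-type rank analysis on the $a_i$-direction (which must simultaneously cover all $m$ distinct target directions $T_k$) would yield an explicit bad $(pm+p-2)$-subset whose $v_i$'s collectively miss at least one $T_k$. The extra $p-1$ workers beyond the naive $pm$ dimension are forced precisely by the rank-1 constraint, since each $T_k$ has rank $p$, so the code must retain full $b$-rank $p$ even after up to $N-K$ stragglers are removed.

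For the information-theoretic converse in \eqref{bound:it}, I would invoke the bilinear complexity $R(p,m,n)$ of the $m$-by-$p$ times $p$-by-$n$ matrix product. By specializing block sizes to $1 \times 1$ (i.e.\ $s = p$, $r = m$, $t = n$), each worker's output becomes a single scalar bilinear form in the entries of $A$ and $B$, and any $K^*$-recoverable scheme yields a bilinear algorithm for the block product using at most $K^*$ multiplications; hence $K^* \geq R(p,m,n)$ over any field. When $m = 1$ or $n = 1$ the underlying problem is a matrix-vector product, whose bilinear complexity evaluates to $R = p\max(m,n)$, and a direct check gives $R > (pmn + p - 1)/2$ for all $m,n,p \geq 1$, yielding the strict bound $K^* > K_{\textup{entangled-poly}}/2$. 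The main obstacle is the linear converse above: a naive dimension count yields only $K^*_{\textup{linear}} \geq m$, so extracting the full $pm + p - 1$ threshold requires a careful combinatorial argument that interlocks the rank-1 structure of the $v_i$'s with the block-diagonal structure of the target $W$, and constructing the explicit bad $(pm+p-2)$-subset is the delicate step.
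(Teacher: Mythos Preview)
Your information-theoretic converse for \eqref{bound:it} has a genuine gap. You claim that after specializing to $1\times 1$ blocks, each worker's output is a bilinear form in the entries of $A$ and $B$, and hence a $K^*$-recoverable scheme furnishes a bilinear algorithm with $K^*$ multiplications. But this only holds when the encoding functions $f_i,g_i$ are linear; for a general (nonlinear) code, $\tilde C_i=f_i(A)^\intercal g_i(B)$ is merely a product of two arbitrary scalar functions of $A$ and of $B$, not a bilinear form. Consequently the inequality $K^*\geq R(p,m,n)$ does \emph{not} follow for nonlinear schemes, and your argument would in fact prove the bound over any field, contradicting the paper's explicit finite-field hypothesis. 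The paper's proof takes a completely different route: it fixes $B$ and observes that the map $B\mapsto A^\intercal B$ determines $A$, builds the Markov chain $A\to\{\tilde A_i\}_{i\in\mathcal K}\to\{\tilde C_{i,\textup{func}}\}_{i\in\mathcal K}\to C_{\textup{func}}$, and then uses an entropy-counting argument over a finite field (letting $A$ be uniform) to conclude $|\mathcal K|\geq pm$. This is where the finiteness of $\mathbb{F}$ is actually used.

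For the linear converse, your rank-one matrix framework is a legitimate reformulation, but the proposal stops precisely at the hard step: you name the bad $(pm+p-2)$-subset as ``the delicate step'' without constructing it, and the hints you give (iterated linear dependence among the $b_i$'s plus a Vandermonde-type analysis on the $a_i$'s) do not obviously assemble into a proof. The paper's argument is structurally different and worth comparing: rather than hunting for a bad subset directly, it fixes an arbitrary $K^*_{\textup{linear}}$-subset $\mathcal K$, first shows $\{\boldsymbol a_i\}_{i\in\mathcal K}$ must span $\mathbb{F}^{pm}$ (else some nonzero $\alpha$ kills all outputs while $\alpha^\intercal\beta$ varies), extracts a basis $\mathcal B\subset\mathcal K$ of size $pm$, adjoins one extra worker $\tilde k\notin\mathcal K$ (using $K^*_{\textup{linear}}<N$), and then proves via a dual-basis trick that $\{\boldsymbol b_i\}_{i\in(\mathcal K\cup\{\tilde k\})\setminus\mathcal B}$ spans $\mathbb{F}^{pn}$, forcing $K^*_{\textup{linear}}+1-pm\geq pn$. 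The extra-worker move and the dual-basis argument are the two ideas your sketch is missing.
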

	
		\begin{remark}
		We prove Theorem \ref{th:li} by first exploiting the algebraic structure of matrix multiplication to develop a linear algebraic converse for equation (\ref{eq:tight}), and then constructing an information theoretic converse to prove inequality (\ref{bound:it}). The linear algebraic converse only relies on two properties of the matrix multiplication operation: 1) bilinearity, and 2) uniqueness of zero element. This technique can be extended to any other bilinear operations with similar properties, such as convolution, as mentioned later (see Theorem \ref{thm:conv}). On the other hand, the information theoretic converse is obtained through a cut-set type argument, which allows a lower bound on the recovery thresholds even for non-linear codes.

%	The key proof idea is to first exploit the algebraic structure of matrix multiplication and develop a linear algebraic converse, which proves equation (\ref{eq:tight}). Then we construct an information theoretic converse which also holds for non-linear codes, to prove inequality (\ref{bound:it}).

    %	The optimality of entangled polynomial code can be extended to a more general setting, where non-linear coding strategies are also taken into account. Through an information theoretic converse, we show that if the base field $\mathbb{F}$ is finite, the entangled polynomial code achieves the optimum recovery threshold $K^*$ within a factor of $2$ for the minimum storage setting. We formally state this result in the following theorem.
	
	\end{remark}

	%	Our final result on the main problem is characterizing the optimum linear recovery threshold $K^*$ within a factor of $2$ for \emph{all} possible  $p$, $m$, $n$, and $N$, using the fundamental concept of bilinear complexity \cite{gs005}:

		Our final result on the main problem is characterizing the optimum linear recovery threshold $K^*_{\textup{linear}}$ within a factor of $2$ for \emph{all} possible  $p$, $m$, $n$, and $N$, by developing an {improved version of the entangled polynomial code}. This characterization involves the fundamental concept of bilinear complexity \cite{gs005}:

	%	\begin{remark}
	%    We prove the above converse using an information theoretic lower bound, which is obtained by applying a cut-set type argument around the master node. 
	  
	%\end{remark}
	
	%	todo:end
		
	\begin{definition}\label{def:bi}
	%We define 
	The \emph{bilinear complexity} of multiplying an $m$-by-$p$ matrix and a $p$-by-$n$ matrix, denoted by $R(p,m,n)$, \Qian{is defined as} the minimum number of element-wise multiplications required to complete such an operation. Rigorously, $R(p,m,n)$ denotes the minimum integer $R$, such that we can find tensors $a\in\mathbb{F}^{R\times p\times m}$, $b\in \mathbb{F}^{R\times p\times n}$, and $c\in \mathbb{F}^{R\times m\times n}$, satisfying
	\begin{align}\label{eq:bidef}
    	\sum_{i}c_{ijk}\left(\sum_{j',k'} A_{j'k'} a_{ij'k'}\right) 
	    &\left(\sum_{j'',k''} B_{j''k''} b_{ij''k''}\right)\nonumber\\
	    &
	    = \sum_{\ell} A_{\ell j} B_{\ell k}.%\nonumber\\  \forall\ i\in\{0,1,...,N-1\},
	\end{align}
	for any input matrices $A\in\mathbb{F}^{p\times m}$, $B\in \mathbb{F}^{p\times n}$. %\footnote{\Qian{Note that this definition essentially ,  is independent of }}
	\end{definition}

    Using this concept, we state our result as follows. %in the following Theorem.  
            \begin{theorem}\label{th:2}
            For a distributed matrix multiplication problem of computing $A ^\intercal B $ using $N$ workers, with parameters $p$, $m$, and $n$, the optimum linear recovery threshold is characterized by 
          \begin{align}\label{bounds:f2}
             R(p,m,n) \leq K^*_{\textup{linear}}\leq 2R(p,m,n)-1,
          \end{align}
          where $R(p,m,n)$ denotes the bilinear complexity of multiplying an $m$-by-$p$ matrix and a $p$-by-$n$ matrix.
      \end{theorem}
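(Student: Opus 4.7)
The plan is to prove the two inequalities of Theorem~\ref{th:2} separately: the upper bound by an explicit construction (the improved entangled polynomial code), and the lower bound by showing that the existence of any $K$-recoverable linear code yields a rank-$K$ decomposition of the matrix-multiplication tensor.

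For the upper bound $K^*_{\textup{linear}} \leq 2R(p,m,n)-1$, I would start by taking any optimal bilinear decomposition from Definition~\ref{def:bi}, i.e., tensors $a, b, c$ with $R := R(p,m,n)$ satisfying identity~(\ref{eq:bidef}). Because this identity never multiplies two $A$-variables or two $B$-variables together, it lifts verbatim to the block setting: setting $\bar{A}_i := \sum_{j',k'} a_{ij'k'} A_{j',k'}$ and $\bar{B}_i := \sum_{j'',k''} b_{ij''k''} B_{j'',k''}$ gives $C_{j,k} = \sum_{i=0}^{R-1} c_{ijk}\, \bar{A}_i^\intercal \bar{B}_i$ for every $j,k$. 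So the original multiplication reduces to computing the $R$ diagonal block products $\bar{A}_i^\intercal \bar{B}_i$. To achieve this with recovery threshold $2R-1$, I would use a Lagrange-style polynomial code: choose $R$ distinct interpolation points $z_0,\dots,z_{R-1}\in\mathbb{F}$ with associated Lagrange basis polynomials $L_i$ (so $L_i(z_{i'})=\delta_{i,i'}$), assign each worker $\ell$ a distinct evaluation point $x_\ell$, and set
\begin{align}
\tilde{A}_\ell = \sum_{i=0}^{R-1} \bar{A}_i\, L_i(x_\ell), \qquad \tilde{B}_\ell = \sum_{i=0}^{R-1} \bar{B}_i\, L_i(x_\ell).
\end{align}
Then $\tilde{C}_\ell = \tilde{A}_\ell^\intercal \tilde{B}_\ell$ is the value at $x_\ell$ of the matrix-valued polynomial $h(x) := \bigl(\sum_i \bar{A}_i L_i(x)\bigr)^\intercal \bigl(\sum_j \bar{B}_j L_j(x)\bigr)$, which has degree at most $2R-2$. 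Any $2R-1$ evaluations uniquely determine $h$; the master interpolates, recovers each $\bar{A}_i^\intercal \bar{B}_i = h(z_i)$, and assembles $C_{j,k}=\sum_i c_{ijk} h(z_i)$. Since the map from the submatrices of $A, B$ to $\tilde{A}_\ell, \tilde{B}_\ell$ is linear with scalar coefficients, this is a linear code.

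For the lower bound $K^*_{\textup{linear}} \geq R(p,m,n)$, I would take any $K$-recoverable linear code with encoding tensors $a_{\ell j k}, b_{\ell j k}$, fix a recovery subset $\mathcal{K}$ of size $K$, and let $c_{\ell j k}$ be the corresponding decoding coefficients. Expanding $\tilde{C}_\ell = \tilde{A}_\ell^\intercal \tilde{B}_\ell$ entry-by-entry and imposing $\hat{C}_{j,k}=C_{j,k}$ for all $A, B$ (treating the entries of each submatrix as independent variables), equating coefficients of the independent monomials yields the scalar identity
\begin{align}
\sum_{\ell \in \mathcal{K}} a_{\ell j' k'}\, b_{\ell j'' k''}\, c_{\ell j k} = \delta_{j',j''}\, \delta_{k',j}\, \delta_{k'',k}
\end{align}
for every index tuple $(j',k',j'',k'',j,k)$. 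The right-hand side is precisely the $(p,m,n)$ matrix-multiplication tensor $M_{p,m,n}$, while the left-hand side writes it as a sum of $K$ rank-one outer products $u_\ell \otimes v_\ell \otimes w_\ell$ with $u_\ell=(a_{\ell j' k'})_{j',k'}$, $v_\ell=(b_{\ell j'' k''})_{j'',k''}$, $w_\ell=(c_{\ell j k})_{j,k}$. Hence $K$ is at least the tensor rank of $M_{p,m,n}$, which by Definition~\ref{def:bi} equals $R(p,m,n)$.

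The main obstacle I anticipate is on the upper-bound side: a naive monomial-basis encoding would recover only the sum $\sum_i \bar{A}_i^\intercal \bar{B}_i$ rather than each $\bar{A}_i^\intercal \bar{B}_i$ individually, and the Lagrange-basis variant is the right fix precisely because it lets the master re-evaluate the interpolated polynomial at each $z_i$ to extract the diagonal products. Once that idea is in place, verifying that the block-level lifting of the bilinear identity is legitimate (no commutativity needed, since each multiplication pairs an $A$-term with a $B$-term in a consistent order) and that $\deg h \leq 2R-2$ is routine. The lower bound, by contrast, becomes a clean bookkeeping argument once the correspondence between linear codes and tensor decompositions of $M_{p,m,n}$ is identified.
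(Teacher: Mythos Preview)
Your proposal is correct and follows essentially the same approach as the paper. For the upper bound, the paper also reduces to computing the $R(p,m,n)$ ``diagonal'' products $\bar{A}_i^\intercal \bar{B}_i$ via an optimal bilinear algorithm and then encodes them with Lagrange-interpolation polynomials evaluated at $N$ distinct points (your $L_i(x_\ell)$ is exactly the paper's $\prod_{k\neq j}\frac{y_i-x_k}{x_j-x_k}$); for the lower bound, the paper likewise observes that a linear code restricted to any $K$ workers is precisely a rank-$K$ bilinear algorithm for the $(p,m,n)$ matrix-multiplication tensor, forcing $K\geq R(p,m,n)$.
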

      
   %     \Rem{ \begin{remark}
%     The key idea of proving Theorem \ref{th:2} is to first convert any matrix multiplication problem into computing the element-wise product of two vectors of length $R(p,m,n)$. Then we show that an optimal computation strategy can be developed for this augmented problem using similar ideas we developed for the entangled polynomial code. Similarly, factor-of-$2$ characterization can also be obtained for non-linear codes, as discussed in Section \ref{sec:2}.
      %This second part of the proof idea does not require the two vectors to be linear functions of the input matrices. Hence, this result can be extended to nonlinear codes, and a similar factor of $2$ characterization can also be obtained.
 %     \end{remark}}
      
           \begin{remark}
    \Qian{The key proof idea of Theorem \ref{th:2} is twofold. We first demonstrate a one-to-one correspondence between linear computation strategies and upper bound constructions\footnote{Formally defined in Section \ref{sec:2}.} for bilinear complexity, which enables converting a matrix multiplication problem into computing the element-wise product of two vectors of length $R(p,m,n)$. Then we show that an optimal computation strategy can be developed for this augmented problem, which achieves the stated recovery threshold.  Similarly to this result, factor-of-$2$ characterization can also be obtained for non-linear codes, as discussed in Section \ref{sec:2}.   }
      \end{remark}
      %todo in scheme: which connects min number of machines with bilinear complexity
      
             \begin{remark}
          \Qian{   The coding construction we developed for proving Theorem \ref{th:2} provides an improved version of the entangled polynomial code. Explicitly, given any upper bound construction for $R(p,m,n)$ with rank $R$, the coding scheme achieves a recovery threshold of $2R-1$, while tolerating arbitrarily many stragglers. This improved version further and orderwise reduces the needed recovery threshold on top of its basic version. %, without requiring constructions that exactly achieves $R(p,m,n)$.
          For example, by simply applying the well-know Strassen's construction \cite{Strassen1969}, which provides an upper bound $R(2^k,2^k,2^k)\leq7^k $ for any $k\in\mathbb{N}$, the proposed coding scheme achieves a recovery threshold of $2\cdot 7^k-1$, which orderwise improves upon $K_{\textup{entangled-poly}}=8^k+2^k-1$ achieved by the entangled polynomial code. Further improvements can be achieved by applying constructions with lower ranks, up to $2R(p,m,n)-1$.   }    
                        %   Hence, always opt, even it is not yet known.  
                         % 1 . does not require knowing. as it applies to upper bound, as mentioned bound is scheme.
            %  (or equivalently, any linear computation strategy using $R$ workers)
%2. why bilinear complexity. 
%3. strassen
      \end{remark}

      %\begin{corollary}\label{cor:2}
      %      For a distributed matrix multiplication problem of computing $A ^\intercal B $ using $N$ workers, with parameters $p$, $m$, and $n$, let the optimum linear recovery threshold is characterized by 
      %    \begin{align}
      %       R(p,m,n) \leq K^*_{\textup{linear}}\leq 2R(p,m,n)-1,
      %    \end{align}
      %    where $R(p,m,n)$ denotes the bilinear complexity of multiplying an $m$-by-$p$ matrix and a $p$-by-$n$ matrix
      %\end{corollary}
      
      \begin{remark}
      In parallel to this work, the Generalized PolyDot scheme was proposed in \cite{8437852} to extend the PolyDot construction \cite{8262882} to asymmetric matrix-vector multiplication. Generalized PolyDot can be applied to achieve the same recovery threshold of the entangled polynomial code for special case of $m=1$ or $n=1$. However, entangled polynomial codes achieve (unboundedly) better recovery thresholds for general values of $p$, $m$, and $n$.
      \end{remark}
      
	%In the following sections, 
      
The techniques we developed in this paper can also be extended to several other problems, such as coded convolution \cite{8006960} and fault-tolerant computing \cite{FTC,1457803}, leading to tight characterizations. For coded convolution, we present our result in the following theorem.

	\begin{theorem}
	    \label{thm:conv} 
		For the distributed convolution problem of computing $\boldsymbol{a}*\boldsymbol{b}$ using $N$ workers that can each store $\frac{1}{m}$ fraction of $\boldsymbol{a}$ and $\frac{1}{n}$ fraction of $\boldsymbol{b}$, the optimum recovery threshold that can be achieved using linear codes, denoted by $K^*_{\textup{conv-linear}}$ , is exactly characterized by the following equation
		%variation of the polynomial code that achieves a recovery threshold of
		\begin{align}
		    K^*_{\textup{conv-linear}}=K_{\textup{conv-poly}}\triangleq m+n-1.
		\end{align}
	\end{theorem}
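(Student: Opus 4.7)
The plan is to prove Theorem~\ref{thm:conv} by separately establishing the matching achievability and converse bounds. For achievability, I would invoke the convolutional polynomial code of~\cite{NIPS2017_7027}: partition $\boldsymbol{a}$ into equal-length subvectors $\boldsymbol{a}_0,\ldots,\boldsymbol{a}_{m-1}$ and $\boldsymbol{b}$ into equal-length subvectors $\boldsymbol{b}_0,\ldots,\boldsymbol{b}_{n-1}$, assign distinct evaluation points $y_0,\ldots,y_{N-1}\in\mathbb{F}$, and let worker $i$ store $\tilde{\boldsymbol{a}}_i=\sum_{j=0}^{m-1}\boldsymbol{a}_j\,y_i^j$ and $\tilde{\boldsymbol{b}}_i=\sum_{k=0}^{n-1}\boldsymbol{b}_k\,y_i^k$. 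The returned convolution
\begin{equation*}
\tilde{\boldsymbol{a}}_i*\tilde{\boldsymbol{b}}_i=\sum_{\ell=0}^{m+n-2}\Bigl(\sum_{j+k=\ell}\boldsymbol{a}_j*\boldsymbol{b}_k\Bigr)\,y_i^{\ell}
\end{equation*}
is a vector-valued polynomial of degree $m+n-2$ in $y_i$, whose coefficients are precisely the shifted blocks that reassemble $\boldsymbol{a}*\boldsymbol{b}$. Polynomial interpolation from any $m+n-1$ workers then recovers all coefficients and hence the output, giving $K^*_{\textup{conv-linear}}\leq m+n-1$.

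For the converse, I would adapt the linear-algebraic technique developed for Theorem~\ref{th:li}, leveraging the two structural properties the remark highlights: convolution is bilinear and has a unique zero. Any linear code associates with worker $i$ encoding scalars $\alpha_{ij},\beta_{ik}$ and thus a rank-one coefficient tensor $v_i=(\alpha_{ij}\beta_{ik})_{j,k}\in\mathbb{F}^{mn}$, so the returned value is $\sum_{j,k}\alpha_{ij}\beta_{ik}\,(\boldsymbol{a}_j*\boldsymbol{b}_k)$. Decomposing $\boldsymbol{a}*\boldsymbol{b}$ into $m+n-1$ diagonal blocks $C_\ell=\sum_{j+k=\ell}\boldsymbol{a}_j*\boldsymbol{b}_k$ and encoding each by its indicator vector $w_\ell\in\mathbb{F}^{mn}$ supported on the antidiagonal $\{(j,k):j+k=\ell\}$, the vectors $w_0,\ldots,w_{m+n-2}$ are linearly independent since they have pairwise disjoint supports. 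For recovery to succeed from an arbitrary $K$-subset $\mathcal{K}$, the span of $\{v_i:i\in\mathcal{K}\}$ must contain every $w_\ell$, which forces $|\mathcal{K}|\geq m+n-1$.

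The main obstacle is rigorously justifying that decoding each diagonal sum $C_\ell$ is \emph{necessary}, and not merely sufficient, for recovering $\boldsymbol{a}*\boldsymbol{b}$. To close this gap I would pass to the fraction field in which the entries of $\boldsymbol{a}$ and $\boldsymbol{b}$ are treated as formal indeterminates; under such a generic assignment the subvector convolutions $\{\boldsymbol{a}_j*\boldsymbol{b}_k\}_{j,k}$ are linearly independent as polynomial vectors, so the only way a linear combination of worker outputs can equal $\boldsymbol{a}*\boldsymbol{b}$ is via the prescribed diagonal-sum coefficients $w_\ell$. This parallels the genericity step in the proof of Theorem~\ref{th:li}, and combined with the rank argument above yields the matching lower bound $K^*_{\textup{conv-linear}}\geq m+n-1$.
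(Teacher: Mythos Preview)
Your achievability matches the paper's. The converse, however, has a genuine gap: your span argument implicitly requires the decoder to be linear, whereas the theorem (see the formulation in Appendix~\ref{app:conv}) constrains only the \emph{encoding} functions and allows arbitrary decoding. Your proposed genericity fix says that ``the only way a linear combination of worker outputs can equal $\boldsymbol{a}*\boldsymbol{b}$ is via the prescribed $w_\ell$'', which is precisely a linear-decoding hypothesis. With an arbitrary decoder, the fact that the span of $\{v_i:i\in\mathcal{K}\}$ misses some $w_\ell$ does not by itself produce two input pairs with identical worker outputs but distinct products; your genericity step does not manufacture such a collision. (There is also the side issue that worker outputs live in $\mathbb{F}^{2s-1}$ while $\boldsymbol{a}*\boldsymbol{b}$ lives in $\mathbb{F}^{(m+n)s-1}$, so ``equal'' already needs entrywise bookkeeping; this can be repaired, but again only under linear decoding.)

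The paper's converse avoids the span argument entirely and is much shorter. If some $\mathcal{K}$ with $|\mathcal{K}|\le m+n-2$ sufficed, partition $\mathcal{K}=\mathcal{K}_a\cup\mathcal{K}_b$ with $|\mathcal{K}_a|\le m-1$ and $|\mathcal{K}_b|\le n-1$. The linear map $\boldsymbol{a}\mapsto(\tilde{\boldsymbol a}_i)_{i\in\mathcal{K}_a}$ sends $\mathbb{F}^{ms}$ into a space of dimension at most $(m-1)s$ and hence has nontrivial kernel; pick a nonzero $\boldsymbol{a}$ there, and likewise a nonzero $\boldsymbol{b}$ annihilating $(\tilde{\boldsymbol b}_i)_{i\in\mathcal{K}_b}$. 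Then every worker in $\mathcal{K}$ returns $0$, so any decoder must output $\boldsymbol{0}$; yet $\boldsymbol{a}*\boldsymbol{b}\neq\boldsymbol{0}$ because convolution has no zero divisors. This is the ``bilinearity plus unique zero'' idea the remark advertises, but deployed as a kernel argument that explicitly constructs the fooling inputs rather than as a rank count on coefficient tensors.
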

	
	%	\begin{remark}
	%	As we have mentioned in 
	%	We prove Theorem xx by extending xx \ref{th:li} by developing a linear algebraic converse bounding technique. 
	%	This technique only relies on two properties of the matrix multiplication operation: the bilinearity, and uniqueness of zero element. 
%		Furthermore, this converse can be extended to any other bilinear operations with the same property. As an example, we prove that the coded convolution design we proposed in \cite{NIPS2017_7027} is exact optimal among all linear coding designs, which is formally stated in the following corollary. TODO: commont on xxx
%	\end{remark}
	
		\begin{remark}
			Theorem \ref{thm:conv} is proved based on our previously developed coded computing scheme for convolution, which is a variation of the polynomial code \cite{NIPS2017_7027}. As mentioned before, we extend the proof idea of Theorem \ref{th:li} to prove the matching converse. This theorem proves the optimality of the computation scheme in \cite{NIPS2017_7027} among all computation strategies where the encoding functions are linear. For detailed problem formulation and proof, see Appendix \ref{app:conv}.  
	\end{remark}

   Our second extension is in the fault-tolerant computing setting, \Qian{which was first discussed in \cite{FTC} for matrix multiplication.
   Unlike the straggler effects we studied in this paper, fault tolerance considers scenarios where} arbitrary errors can be injected into the computation, and the master has no information about which subset of workers are returning errors. We show that the techniques we developed for straggler mitigation can also be applied in this setting to improve robustness against computing failures, and the optimality of any encoding function in terms of recovery threshold also preserves when applied in the fault-tolerant computing setting. As an example, we present the following theorem,  demonstrating this connection.

	%\begin{remark}
	
	%In this work we focused on designing coding and converse bounding techniques for straggler issues. The same  techniques can also be applied to the fault-tolerant computing setting (e.g., within the algorithmic fault-tolerant computing framework of \cite{FTC,1457803}, where a module can produce arbitrary error results under failure), to improve robustness to failures in computing. We show that the optimality of the entangled polynomial code in terms of the recovery threshold also propagates to the fault-tolerant computing setting, which is formally  stated in the following corollary.
	%\end{remark}
	
%	TODO:add detection
\begin{theorem}\label{thm:ftc}
	For a distributed matrix multiplication problem of computing $A ^\intercal B $ using $N$ workers, with parameters $p$, $m$, and $n$,  if $m=1$ or $n=1$, the entangled polynomial code can detect up to  
		\begin{align}
		 E^*_{\textup{detect}} ={N-K_{\textup{entangled-poly}}}
		\end{align}	
		errors, and correct up to 
		\begin{align}
		 E^*_{\textup{correct}}=\left\lfloor\frac{N-K_{\textup{entangled-poly}}}{2}\right\rfloor
		\end{align}
		errors.
		This can not be improved using any other linear encoding strategies. 
	\end{theorem}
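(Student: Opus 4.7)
The plan is to deduce both the achievability and the converse from the classical erasure-versus-error tradeoff for linear codes, combined with the tight characterization of the optimum linear recovery threshold in Theorem \ref{th:li}. The structural fact powering the argument is that, for any linear computation strategy, the set of correct worker output tuples indexed by the target $C$ forms a code of length $N$ over the matrix alphabet, whose detection and correction capabilities are governed by its minimum Hamming distance $d$ in the usual way: $d \geq E + 1$ is necessary and sufficient for detecting $E$ errors, $d \geq 2E + 1$ for correcting them, and $d \geq s + 1$ for tolerating $s$ erasures. The last property yields the crucial link: erasure tolerance $s$ is equivalent to having recovery threshold at most $N - s$.

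For the achievability, I would focus on the $n = 1$ case (the $m = 1$ case being symmetric). Under the entangled polynomial code, each worker $i$ returns $\tilde{C}_i = h(x_i)$, where $h(x)$ is a matrix-valued polynomial of degree $K_{\textup{entangled-poly}} - 1$ whose coefficients contain the desired submatrices of $C$, evaluated at $N$ distinct points $x_0, \ldots, x_{N-1}$. This is exactly a Reed--Solomon code applied entry-wise, with length $N$, dimension $K_{\textup{entangled-poly}}$, and minimum distance $N - K_{\textup{entangled-poly}} + 1$. Standard Reed--Solomon decoders (e.g., Berlekamp--Welch) then detect any error pattern of weight up to $N - K_{\textup{entangled-poly}}$ and uniquely decode any pattern of weight up to $\lfloor (N - K_{\textup{entangled-poly}})/2 \rfloor$, operated in parallel over each matrix entry.

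For the converse, suppose a linear computation strategy detects $E$ errors. The detection property forces the induced output code to have minimum distance at least $E + 1$; otherwise two distinct target products $C \neq C'$ would yield output tuples within Hamming distance $E$, and an adversarial $E$-weight error could convert one to the other undetected. A minimum distance of at least $E + 1$ implies recovery from any $E$ erasures, so the recovery threshold is at most $N - E$. Theorem \ref{th:li} however guarantees $K^*_{\textup{linear}} = K_{\textup{entangled-poly}}$ in the $m = 1$ or $n = 1$ regime, forcing $K_{\textup{entangled-poly}} \leq N - E$ and therefore $E \leq N - K_{\textup{entangled-poly}}$. The analogous chain for correction---$d \geq 2E + 1$, hence erasure tolerance at least $2E$, hence recovery threshold at most $N - 2E$---yields $E \leq \lfloor (N - K_{\textup{entangled-poly}})/2 \rfloor$.

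The main obstacle I anticipate is making the ``code over a matrix alphabet'' formalism rigorous enough that the standard distance/erasure implications transfer without subtlety. Concretely, one must verify that the set of achievable output tuples, indexed by $C$, behaves as a code whose minimum Hamming distance coincides with the fewest coordinates in which two distinct targets $C, C'$ can differ, and that this minimum distance indeed controls both detection and correction as in the classical setting. This follows from the bilinearity of the encoding together with the uniqueness of the zero element---precisely the two structural ingredients invoked in the converse of Theorem \ref{th:li}---so the translation to the fault-tolerant setting should go through cleanly once the coding-theoretic setup is fixed.
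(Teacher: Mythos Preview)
Your proposal is correct and follows essentially the same approach as the paper: both arguments hinge on extending the Hamming-distance concept to the coded-computing output tuples, establishing the standard three-way equivalence between minimum distance, erasure tolerance (recovery threshold), and error detection/correction capability, and then invoking Theorem~\ref{th:li} to pin down the optimum distance for linear encodings when $m=1$ or $n=1$. The only cosmetic difference is that you prove achievability directly via the Reed--Solomon structure of the entangled polynomial code, whereas the paper packages the distance--threshold--detection--correction equivalences into a standalone lemma (Lemma~\ref{lemma:hamming}) that applies to arbitrary, possibly non-linear, encodings and then specializes.
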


	\begin{remark}
			The proof idea for Theorem \ref{thm:ftc} is to connect the straggler mitigation problem and the fault tolerance problem by extending the concept of Hamming distance to coded computing.  Specifically, we map the straggler mitigation problem to the problem of correcting erasure errors, and the fault tolerance problem to the problem of correcting arbitrary errors. The solution to these two communication problems are deeply connected by the Hamming distance, and we show that this result extends to coded computing (see Lemma \ref{lemma:hamming} in Appendix \ref{app:ftc}).  
			Since the concept of Hamming distance is not exclusively defined for linear codes, this connection also holds for arbitrary computation strategies. Furthermore, this approach can be easily extended to the hybrid settings where both stragglers and computing errors exist, and similar results can be proved. The detailed formulation and proof can be found in Appendix \ref{app:ftc}. 
	\end{remark}

	In Section \ref{sec:scheme}, we prove Theorem \ref{th:scheme} by describing the entangled polynomial code. Then in Section \ref{sec:conv},  we prove Theorem \ref{th:li} by deriving the converses. Finally, we present the coding construction and converse for proving Theorem \ref{th:2} in Section \ref{sec:2}.

	\section{Entangled Polynomial Code}\label{sec:scheme}

	In this section, we prove Theorem \ref{th:scheme} by formally describing the entangled polynomial code and its decoding procedure. 
	We start with an illustrating example.

	\subsection{Illustrating Example}
	
	%We first demonstrate the entangled polynomial code through a motivating example. 
	Consider a distributed matrix
multiplication task of computing $A ^\intercal B $ using $N=5$ workers that can each store half of the rows (i.e., $p=2$ and $m=n=1$). We evenly divide each input matrix along the row side into 2 submatrices:
		\begin{align}
	   A = 
 \begin{bmatrix}
  A_{0}  \\
  A_{1}  
 \end{bmatrix}, \ \ \ \ \ \ \ \ 
	    B = 
 \begin{bmatrix}
  B_{0} \\
  B_{1} 
 \end{bmatrix},
	\end{align}
Given this notation, we essentially want to compute 
 			\begin{align}
	  C=A ^\intercal B =  
 \begin{bmatrix}
 A _0^\intercal   B _0  + A _1^\intercal   B _1  
 \end{bmatrix}.
	\end{align}
	
	A naive computation strategy is to let the $5$ workers compute each $ A _i^\intercal   B _i$ uncodedly with repetition. Specifically we can let $3$ workers compute $ A _0^\intercal   B _0$ and $2$ workers compute $ A _1^\intercal   B _1$. However, this approach can only robustly tolerate $1$ straggler, achieving a recovery threshold of $4$.
	Another naive approach is to use random linear codes, i.e., let each worker store a random linear combination of $ A _0$, $ A _1 $, and a combination of $ B_0$, $ B_1 $. However, the resulting computation result of each worker is a random linear combination of $4$ variables $ A _0^\intercal   B _0$, $ A _0^\intercal   B _1$, $ A _1^\intercal   B _0$, and $ A _1^\intercal   B _1$, which also results in a recovery threshold of $4$.
	
	Surprisingly, there is a simple computation strategy for this example that achieves the optimum linear recovery threshold of $3$. The main idea is to instead inject structured redundancy tailored to the matrix multiplication operation. We present this proposed strategy as follows:

	 	\begin{figure}[htbp]
			\vspace{-3mm}
			\centering
 			\includegraphics[width=0.95\linewidth]{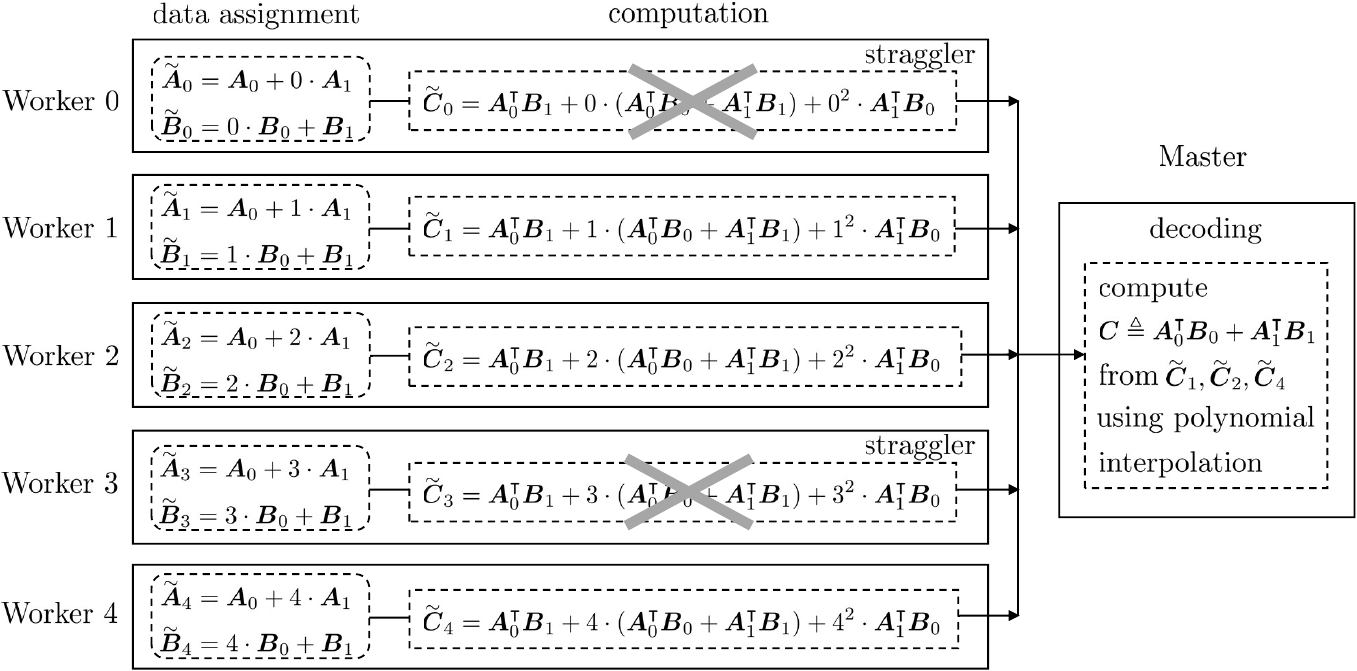}
			\caption{Example using entangled polynomial code, with $5$ workers that can each store half of each input matrix. (a) Computation strategy: each worker $i$ stores $A_0+iA_1$ and $iB_0+B_1$, and computes their product. (b) Decoding: master waits for results from \emph{any} $3$ workers, and decodes the output using polynomial interpolation. }
			\vspace{-3mm}
			\label{fig:exp}
		\end{figure}
	
	Suppose elements of $A, B$ are in $\mathbb{R}$. Let each worker $i\in \{0,1,...,4\}$ store the following two coded submatrices:
	\begin{align}
    	\tilde{A}_i= A _0+i A _1,  \ \ \ \ \ \ \ \ 
	    \tilde{B}_i= iB _0+ B _1. \label{eq:bexp}
    	\end{align}
    	To prove that this design gives a recovery threshold of $3$, we need to find a valid decoding function for any subset of $3$ workers.   	
We demonstrate this decodability through a representative scenario, where the master receives the computation results from workers $1$, $2$, and $4$, as shown in Figure \ref{fig:exp}. The decodability for the other $9$ possible scenarios can be proved similarly.
	
		 According to the designed computation strategy, we have  
 	\begin{align}\label{eq:fore}
 	\begin{bmatrix}
    \tilde{C}_1   \\   \tilde{C}_2  \\   \tilde{C}_4 
\end{bmatrix}
&=
	    \begin{bmatrix}
    1^0&1^1&1^2\\
   2^0&2^1&2^2 \\
   4^0&4^1&4^2 
\end{bmatrix}
 	\begin{bmatrix}
     A _0^\intercal   B _1   \\    A _0^\intercal   B _0 +
     A _1^\intercal   B _1    \\    A _1^\intercal   B _0 
\end{bmatrix}.
	\end{align}	
	The coefficient matrix in the above equation is a Vandermonde matrix, which is  invertible because its parameters $1,2, 4$ are distinct in $\mathbb{R}$. So one decoding approach is 
	to directly invert equation (\ref{eq:fore}), of which the returned result includes the needed matrix $C= A _0^\intercal   B _0  + A _1^\intercal   B _1$. This 
	proves the decodability.

%	More generally, this can be viewed as poly interp. this view allows designing decoding alg. In general could be complex

	However, as we will explain in the general coding design, directly computing this inverse problem using the classical inversion algorithm might be expensive in some more general cases. Quite interestingly, because of the algebraic structure we designed for the computation strategy (i.e., equation (\ref{eq:bexp})), the decoding process can be viewed as a polynomial interpolation problem (or equivalently, decoding a Reed-Solomon code). 
	
	Specifically, in this example each worker $i$ returns  
	\begin{align}
    	\tilde{C}_i&= \tilde{A} ^\intercal_i \tilde{B}_i = A _0^\intercal  B _1+i (A _0^\intercal  B _0+  A _1^\intercal  B _1) +i^2  A _1^\intercal  B _0,
    	\end{align}
	which is essentially the value of the following polynomial at point $x=i$:
	\begin{align}
    h(x)\triangleq \tilde{A} ^\intercal_i \tilde{B}_i = A _0^\intercal  B _1+x (A _0^\intercal  B _0+  A _1^\intercal  B _1) +x^2  A _1^\intercal  B _0.
    	\end{align}
	Hence, recovering $ C $ using computation results from $3$ workers is equivalent to recovering the linear term coefficient of a quadratic function given its values at $3$ points.
	Later in this section, we will show that by mapping the decoding process to polynomial interpolation, we can achieve almost-linear decoding complexity even for arbitrary parameter values.
	
	%\begin{remark}
	%\Qian{unlike the example we had in []. this example considers a different partitioning, which is smaller. The fact is that computationally easier, as poly requires all, but here a linear combination . The explicit construction in this example is from convolution scheme. How to do it in general. 	}
	%\end{remark}

	\subsection{General Coding Design}\label{sec:general_sub}
	
	Now we present the entangled polynomial code, which achieves a recovery threshold $pmn+p-1$ for any $p$, $m$, $n$ and $N$ as stated in Theorem \ref{th:scheme}.\footnote{For $N<pmn+p-1$, a recovery threshold of $N$ is achievable by definition. Hence we focus on the case where $N\geq pmn+p-1$.} 
	First of all, we evenly divide each input matrix into $pm$ and $pn$ submatrices according to equations (\ref{eq:a}) and (\ref{eq:b}).
	We then assign each worker $i\in\{0,1,...,N-1\}$ an element in $\mathbb{F}$, denoted by $x_i$, and make sure that all $x_i$'s are distinct. 
	Under this setting, we define the following class of computation strategies.
	
		\begin{definition}
	Given parameters $\alpha,\beta,\theta\in\mathbb{N}$, we define the $(\alpha,\beta,\theta)$-polynomial code as
		\begin{align}
		\tilde{A}_i&=\sum_{j=0}^{p-1}\sum_{k=0}^{m-1}  A _{j,k} x_i^{j\alpha+k\beta},\nonumber\\ 
	    \tilde{B}_i&=\sum_{j=0}^{p-1}\sum_{k=0}^{n-1}  B _{j,k} x_i^{(p-1-j)\alpha+k\theta},  \ \ \  \forall\ i\in\{0,1,...,N-1\}.
	\end{align}
		\end{definition}
	In an $(\alpha,\beta,\theta)$-polynomial code, each worker essentially evaluates a polynomial whose coefficients are fixed linear combinations of the products $ A _{j,k}^\intercal  B _{j',k'}$.
		Specifically, each worker $i$ returns
		\begin{align}
		\tilde{C}_i&= \tilde{A} ^\intercal_i \tilde{B}_i \nonumber\\
		&=\sum_{j=0}^{p-1}\sum_{k=0}^{m-1} \sum_{j'=0}^{p-1}\sum_{k'=0}^{n-1} A _{j,k}^\intercal  B _{j',k'} x_i^{(p-1+j-j')\alpha+k\beta+k'\theta}. \label{eq:polyprod}
	\end{align}
		Consequently, when the master receives results from enough workers, it can recover all these linear combinations using polynomial interpolation. Recall that we aim to recover
			\begin{align}
	   C = 
 \begin{bmatrix}
  C_{0,0} & C_{0,1} & \cdots & C_{0,n-1} \\
  C_{1,0} & C_{1,1} & \cdots & C_{1,n-1} \\
  \vdots  & \vdots  & \ddots & \vdots  \\
  C_{m-1,0} & C_{m-1,1} & \cdots & C_{m-1,n-1} 
 \end{bmatrix},
	\end{align}
where each submatrix $C_{k,k'}\triangleq\sum_{j=0}^{p-1} A _{j,k}^\intercal  B _{j,k'}$ is also a fixed linear combination of these products. We design the values of parameters $(\alpha,\beta,\theta)$ such that all these linear combinations appear in (\ref{eq:polyprod}) separately as coefficients of terms of different degrees.  Furthermore, we want to minimize the degree of the polynomial $\tilde{C}_i$, in order to reduce the recovery threshold. 

	One design satisfying these properties is $(\alpha,\beta,\theta)=(1,p,pm)$, 
	i.e,  
	\begin{align}
		\tilde{A}_i&=\sum_{j=0}^{p-1}\sum_{k=0}^{m-1}  A _{j,k} x_i^{j+kp},\nonumber\\
	    \tilde{B}_i&=\sum_{j=0}^{p-1}\sum_{k=0}^{n-1}  B _{j,k} x_i^{p-1-j+kpm}.
	\end{align} 
	Hence, each worker returns the value of the following degree $pmn+p-2$ polynomial at point $x=x_i$:
		\begin{align}
	    h_i(x)&\triangleq\tilde{A} ^\intercal_i \tilde{B}_i \nonumber\\&=\sum_{j=0}^{p-1}\sum_{k=0}^{m-1} \sum_{j'=0}^{p-1}\sum_{k'=0}^{n-1} A _{j,k}^\intercal  B _{j',k'} x_i^{(p-1+j-j')+kp+k'pm},
	\end{align}
	where each $C_{k,k'}$ is exactly the coefficient of the $(p-1+kp+k'pm)$-th degree term.
	Since all $x_i$'s are selected to be distinct, recovering $ C $ given results from any $pmn+p-1$ workers is essentially interpolating $h(x)$ using $pmn+p-1$ distinct points. Because the degree of $h(x)$ is $pmn+p-2$, the output $ C $ can always be uniquely decoded.

	\subsection{Computational complexities}
	\label{sebsec:e}
      
      %Furthermore,  the entangled polynomial code can be decoded at the master node with at most  the complexity of polynomial interpolation given $pmn+p-1$ points.

      In terms of complexity, the decoding process of entangled polynomial code can be viewed as interpolating a degree $pmn+p-2$ polynomial for $\frac{rt}{mn}$ times. It is well known that  polynomial interpolation of degree $k$ has a complexity of $O(k\log^2 k \log\log k)$ \cite{von2013modern}.\footnote{When the base field supports FFT, this complexity bound can be improved to $O(k\log^2 k)$. } Therefore, decoding entangled polynomial code only requires at most a complexity of $O(  prt \log^2 (pmn) \log\log (pmn))$, which is almost linear to the input size of the decoder ($\Theta(prt)$ elements). This complexity can be reduced by simply swapping in any faster polynomial interpolation algorithm or Reed-Solomon decoding algorithm.
	In addition, this decoding complexity can also be further improved by exploiting the fact that only a subset of the coefficients are needed for recovering the output matrix.
	
	   	\Qian{Note that given the presented computation framework, each worker is assigned to multiply two coded matrices with sizes of $\frac{r}{m}\times\frac{s}{p}$ and $\frac{s}{p}\times\frac{t}{n}$, which requires a complexity of $O(\frac{srt}{pmn})$.\footnote{\Qian{More precisely, the commonly used cubic algorithm achieves a complexity of $\theta{(\frac{srt}{pmn})}$ for the general case. Improved algorithms has been found in certain cases (e.g., \cite{Strassen1969,4567976,Bini1980,Sch_nhage_1981,doi:10.1137/0211020,Coppersmith:1981:ACM:1398510.1382702,Strassen:1986:AST:1382439.1382931,COPPERSMITH1990251, stothers2010complexity,Williams12multiplyingmatrices}), however, all known approaches requires a super-quadratic complexity.}} This complexity is independent of the coding design, indicating that the entangled polynomial code strictly improves other designs without requiring extra computation at the workers.    
      	Recall that the decoding complexity of entangled polynomial code grows linearly with respect to the size of the output matrix. The decoding overhead becomes negligible compared to workers' computational load in practical scenarios where the sizes of coded matrices assigned to the workers are sufficiently large. Moreover, the fast decoding algorithms enabled by the Polynomial coding approach further reduces this overhead, compared to general linear coding designs.}
      	
      \Qian{Entangled polynomial code also enables improved performances for systems where the data has to encoded online. For instance, if the input matrices are broadcast to the workers and are encoded distributedly, the linearity of entangled polynomial code allows for an in-place algorithm, which does not require addition storage or time complexity. Alternatively, if centralized encoding is required, almost-linear-time algorithms can also be developed similar to decoding: at most a complexity of $O((\frac{sr}{pm}\log^2(pm)\log\log(pm)+\frac{st}{pn}\log^2(pn)\log\log(pn))N)$ is required using fast polynomial evaluation, which is almost linear with respect to the output size of the encoder ($\Theta((\frac{sr}{pm}+\frac{st}{pn})N)$ elements).     
      %In this paper, we considered a model where the input matrices are encoded and stored prior to the computation. 
      }

  %    Furthermore,  the entangled polynomial code can be decoded at the master node with at most  the complexity of polynomial interpolation given $pmn+p-1$ points.

	\section{Converses} \label{sec:conv}
	
	In this section, we provide the proof of Theorem \ref{th:li}. We first prove equation (\ref{eq:tight}) by developing a linear algebraic converse. Then we prove inequality (\ref{bound:it}) through an information theoretic lower bound.
	
	\subsection{Maching Converses for Linear Codes}\label{subsec:li}
	
	To prove equation (\ref{eq:tight}), we start by developing a converse bound on recovery threshold for general parameter values, then we specialize it to the settings where $m=1$ or $n=1$. We state this converse bound in the following lemma:
	
	\begin{lemma}\label{lemma:gb}
		For a distributed matrix multiplication problem with parameters $p$, $m$, $n$, and $N$, we have 
		\begin{align}\label{bound:gen}
		 K^*_{\textup{linear}}\geq \min\{N,\ pm+pn-1\}.
		\end{align}
	\end{lemma}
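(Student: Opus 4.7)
The plan is to prove the lemma by constructing a \emph{confusing pair} of inputs demonstrating non-decodability. Assume $N \geq pm + pn - 1$ and consider any linear code with recovery threshold $K^*_{\textup{linear}}$; I aim to exhibit a subset $\mathcal{K}$ of size $pm + pn - 2$ from which decoding fails, which forces $K^*_{\textup{linear}} \geq pm + pn - 1$. First I reduce to the scalar-submatrix case: by taking each $A_{j,k}$ and $B_{j,k'}$ to be a scalar multiple of a fixed pair of rank-one matrices with nonzero product, the problem collapses to studying $A \in \mathbb{F}^{p \times m}$ and $B \in \mathbb{F}^{p \times n}$ encoded via coefficient matrices $\alpha_i \in \mathbb{F}^{p \times m}$ and $\beta_i \in \mathbb{F}^{p \times n}$, with worker outputs $\tilde{A}_i \tilde{B}_i = \langle \alpha_i, A\rangle \langle \beta_i, B \rangle$.

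Since each worker output factors as a product, it vanishes whenever one of the two inner products is zero. This motivates a partition $\mathcal{K} = \mathcal{K}_A \sqcup \mathcal{K}_B$ together with a search for $A \in V_A := \bigcap_{i \in \mathcal{K}_A} \alpha_i^\perp$ and $B \in V_B := \bigcap_{i \in \mathcal{K}_B} \beta_i^\perp$ satisfying $A^\intercal B \neq 0$. Because $|\mathcal{K}_A| + |\mathcal{K}_B| \leq pm + pn - 2$, one can always choose the split so that $|\mathcal{K}_A| \leq pm - 1$ and $|\mathcal{K}_B| \leq pn - 1$, which ensures $\dim V_A, \dim V_B \geq 1$. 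The remaining task is to show that for some such partition the bilinear map $(A, B) \mapsto A^\intercal B$ does not vanish identically on $V_A \times V_B$.

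Note that $A^\intercal B = 0$ for all $A \in V_A, B \in V_B$ is equivalent to the column spans $\mathrm{colspan}(V_A)$ and $\mathrm{colspan}(V_B)$ in $\mathbb{F}^p$ being mutually orthogonal, so the goal reduces to finding a split for which this orthogonality fails. As emphasized in the remark following Theorem \ref{th:li}, the argument invokes only the bilinearity of matrix multiplication together with its non-degeneracy (that $A^\intercal B = 0$ for all $B$ forces $A = 0$). To extract the sharp bound $pm + pn - 1$ rather than only the weaker $\max(pm, pn)$ obtainable from naive dimension counting, I would combine this with Flanders' theorem---that any subspace of $\mathbb{F}^{p \times m}$ in which every matrix has rank at most $r$ has dimension at most $r \cdot \max(p, m)$---to guarantee the existence of matrices of maximum possible rank inside $V_A$ and $V_B$ for suitably chosen splits, thereby forcing $\mathrm{colspan}(V_A) + \mathrm{colspan}(V_B)$ to cover $\mathbb{F}^p$ and ruling out orthogonality.

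The main obstacle is balancing the partition so that the Flanders-type rank guarantee applies on the $A$-side and the $B$-side simultaneously while the total split size stays within $pm + pn - 2$. For the tight cases $m = 1$ or $n = 1$ (where the bound matches the achievable $pmn + p - 1 = pm + pn - 1$ from Theorem \ref{th:scheme}), the maximum rank degenerates to $1$ and the whole column-span analysis collapses to a transparent computation in $\mathbb{F}^p$, cleanly producing a confusing input and confirming optimality of the entangled polynomial code. For general $p, m, n$, iterating over the possible splits---or, equivalently, combining the Flanders bound with a quantitative refinement that tracks the rank of the distinguished matrices on each side---delivers the desired confusing input and closes the converse.
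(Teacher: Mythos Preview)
Your reduction to the scalar-submatrix case and the confusing-pair framework are both sound, and indeed the paper uses exactly this partition-and-kernel argument for Theorem~\ref{thm:conv} (convolution), where it succeeds because convolution has no zero divisors. The difficulty you flag---that matrix multiplication \emph{does} have zero divisors---is real, and your proposed patch does not close it. A direct dimension count shows why: with $|\mathcal K_A|+|\mathcal K_B|=pm+pn-2$, the column-span bounds $\dim U_A\ge\dim V_A/m$ and $\dim U_B\ge\dim V_B/n$ only give $\dim U_A+\dim U_B\ge 2p-|\mathcal K_A|/m-|\mathcal K_B|/n$, and for the natural split $|\mathcal K_A|=pm-1$, $|\mathcal K_B|=pn-1$ this equals $1/m+1/n$, which exceeds $p$ only in trivial cases. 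Flanders' theorem bounds the dimension of a bounded-rank subspace, which is essentially the same inequality read the other way and does not sharpen this; iterating over splits cannot help either, since the count fails uniformly. Even for $m=n=1$ and $p=2$ one can write down encodings (e.g.\ $\alpha_i=\beta_i=e_i$ for $i=1,2$) for which \emph{every} partition of $\{1,2\}$ yields $V_A\perp V_B$---and that subset is in fact decodable---so the ``transparent'' case already requires an additional argument you have not supplied.

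The paper's proof takes a structurally different route that sidesteps zero divisors entirely. Rather than fixing one subset $\mathcal K$ and searching for a confusing pair, it assumes a recovery threshold $K^*_{\textup{linear}}<N$ is achieved and exploits decodability on \emph{several} size-$K^*_{\textup{linear}}$ subsets at once. Decodability of any single $\mathcal K$ forces $\{\boldsymbol a_i\}_{i\in\mathcal K}$ to span $\mathbb F^{pm}$; extract a basis $\mathcal B\subset\mathcal K$ of size $pm$, adjoin one extra worker $\tilde k\notin\mathcal K$, and form the dual basis $\{\alpha_k\}_{k\in\mathcal B}$. For each $k\in\mathcal B$ the shifted subset $(\mathcal K\cup\{\tilde k\})\setminus\{k\}$ is also decodable; plugging in $\alpha=\alpha_k$ (which annihilates all workers in $\mathcal B\setminus\{k\}$) forces $\alpha_k^\intercal\beta=0$ for every $\beta$ in the common kernel of the $\boldsymbol b_i$'s with $i\in(\mathcal K\cup\{\tilde k\})\setminus\mathcal B$. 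Since the $\alpha_k$'s span $\mathbb F^{p\times m}$, that kernel is $\{0\}$, so those $\boldsymbol b_i$'s span $\mathbb F^{pn}$, yielding $K^*_{\textup{linear}}+1-pm\ge pn$. The decisive ingredient is the dual-basis trick combined with decodability on the \emph{shifted} subsets---something a single-subset confusing-pair argument cannot access.
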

	When $m=1$ or $n=1$, the RHS of inequality (\ref{bound:gen}) is exactly $K_{\textup{entangled-poly}}$. Hence equation (\ref{eq:tight}) directly follows from Lemma \ref{lemma:gb}. So it only suffices to prove  Lemma \ref{lemma:gb}, and we prove it as follows:
	
	\begin{proof}
		To prove Lemma \ref{lemma:gb}, we only need to consider the following two scenarios:
		
	    (1) If $K^*_{\textup{linear}}=N$, then (\ref{bound:gen}) is trivial.

	    (2) If $K^*_{\textup{linear}}<N$, then we essentially need to show that for any parameter values $p$, $m$, $n$, and $N$ \Qian{satisfying this condition}, we have $K^*_{\textup{linear}}\geq pm+pn-1$. 
	By definition, if such a linear recovery threshold is achievable, we can find a computation strategy, i.e., tensors $a$, $b$, and a class of decoding functions $\boldsymbol{d}\triangleq\{d_{\mathcal{K}}\}$, such that 
		\begin{align}
    	&d_{\mathcal{K}}\left(\left\{\left(\sum_{j',k'} A_{j',k'}^\intercal a_{ij'k'}\right) 
	    \left(\sum_{j'',k''} B_{j'',k''} b_{ij''k''}\right)\right\}_{i\in\mathcal{K}}\right)
	    \nonumber\\ & \ \ \ \  \ \ \ \ \ \ \ \  \ \ \ \ \ \ \ \  \ \ \ \ \ \ \ \  \ \ \ \ \ \ \ \  \ \ \ \ \ \ \ \  \ \ \ \ \  \ \ \ \ = A^\intercal B %\sum_{\ell} A_{\ell,j} B_{\ell,k}
	\end{align}
	for any input matrices $A$ and $B$, and for any subset $\mathcal{K}$ of $K^*_{\textup{linear}}$ workers.
	
	We choose the values of $A$ and $B$, such that each $A_{j,k}$ and $B_{j,k}$ satisfies 
	\begin{align}
	    A_{j,k}&=\alpha_{jk} A_{\textup{c}}, \label{eq:alpha}\\
	    B_{j,k}&=\beta_{jk} B_{\textup{c}}, \label{eq:beta}
	\end{align}
	for some matrices $\alpha\in\mathbb{F}^{p\times m}$, $\beta\in\mathbb{F}^{p\times n}$, and constants $ A_{\textup{c}}\in \mathbb{F}^{\frac{s}{p}\times \frac{r}{m}}$, $ B_{\textup{c}}\in \mathbb{F}^{\frac{s}{p}\times \frac{t}{n}}$ satisfying $A_{\textup{c}}^\intercal B_{\textup{c}}\neq 0$. Consequently, we have
		\begin{align}\label{eq:albe}
    	d_{\mathcal{K}}&\left(\left\{\left(\sum_{j',k'} \alpha_{j'k'} a_{ij'k'}\right) 
	    \left(\sum_{j'',k''} \beta_{j''k''} b_{ij''k''}\right) A_{\textup{c}}^\intercal B_{\textup{c}}\right\}_{i\in\mathcal{K}}\right)
	     \nonumber\\ & \ \ \ \  \ \ \ \ \ \ \ \  \ \ \ \ \ \ \ \  \ \ \ \ \ \ \ \  \ \ \ \ \ \ \ \  \ \ \ \ \ \ \ \  \ \ \ \ \  \ \ \ \ = A^\intercal B  %\sum_{\ell} \alpha_{\ell j} \beta_{\ell k} A_{\textup{c}}^\intercal B_{\textup{c}}
	\end{align}
	for all possible values of $\alpha$, $\beta$, and $\mathcal{K}$.

	Fixing the value $i$, we can view each subtensor $a_{ijk}$ as a vector of length $pm$, and each  subtensor $b_{ijk}$ as a vector of length $pn$. For brevity, we denote each such vector by $\boldsymbol{a}_i$ and $\boldsymbol{b}_i$ respectively. Similarly, we can also view matrices $\alpha$ and $\beta$ as vectors of length $pm$ and $pn$, and we denote these vectors by $\boldsymbol{\alpha}$ and $\boldsymbol{\beta}$. Furthermore, we can define dot products within these vector spaces following the conventions. %\Qian{Specifically, $\boldsymbol{\alpha}\cdot\boldsymbol{a}_i\triangeq $} 
		Using these notations, (\ref{eq:albe}) can be written as
			\begin{align}\label{eq:salbe}
    	d_{\mathcal{K}}\left(\left\{\left(\boldsymbol{\alpha}\cdot\boldsymbol{a}_i\right) 
	    \left(\boldsymbol{\beta}\cdot\boldsymbol{b}_i\right) A_{\textup{c}}^\intercal B_{\textup{c}}\right\}_{i\in\mathcal{K}}\right)
	    &= A^\intercal B.  %\sum_{\ell} \alpha_{\ell j} \beta_{\ell k} A_{\textup{c}}^\intercal B_{\textup{c}}
	\end{align}
	
	Given the above definitions, we now prove that within each subset $\mathcal{K}$ of size $K^*_{\textup{linear}}$, the vectors $\{\boldsymbol{a}_i\}_{i \in \mathcal{K}}$ span the space $\mathbb{F}^{pm}$. Essentially, we need to prove that for \Qian{any such} given subset $\mathcal{K}$, there does not exist a non-zero $\alpha\in \mathbb{F}^{p\times m}$ such that \Qian{the corresponding vector $\boldsymbol{\alpha}\in \mathbb{F}^{pm}$ satisfies} $\boldsymbol{\alpha}\cdot\boldsymbol{a}_i = 0$ for all $i \in \mathcal{K}$. Assume the opposite that such an $\alpha$ exists, {so that  $\boldsymbol{\alpha}\cdot\boldsymbol{a}_i$ is always $0$,}
	then the LHS of (\ref{eq:salbe}) becomes a fixed value. On the other hand, since $\alpha$ is non-zero, we can always find different values of $\beta$ such that $\alpha^{\intercal}\beta$ is variable. Recalling (\ref{eq:alpha}) and (\ref{eq:beta}), the RHS of (\ref{eq:salbe}) cannot be fixed if $\alpha^{\intercal}\beta$ is variable, which results in a contradiction.
	
	Now we use this conclusion to prove ($\ref{bound:gen}$). For any fixed $\mathcal{K}$ \Qian{ with size $K^*_{\textup{linear}}$}, let $\mathcal{B}$ be a subset of indices \Qian{in $\mathcal{K}$} such that $\{\boldsymbol{a}_i\}_{i\in\mathcal{B}}$ form a basis. Recall that we are considering the case where $K^*_{\textup{linear}}<N$, meaning that we can find a worker $\tilde{k}\not\in \mathcal{K}$. For convenience, we define $\mathcal{K}^{+}=\mathcal{K}\cup\{\tilde{k}\}$, and $\mathcal{K}^{-} \triangleq \mathcal{K}^{+} \backslash \mathcal{B}$. Obviously, $|\mathcal{B}|=pm$, and $|\mathcal{K}^{-}|=|\mathcal{K}^{+}|-|\mathcal{B}|=K^*_{\textup{linear}}+1-pm$. Hence, it suffices to prove that $|\mathcal{K}^{-}|\geq pn$, which only requires that $\{\boldsymbol{b}_i\}_{i\in\mathcal{K}^{-}}$ forms a basis of $\mathbb{F}^{pn}$. Equivalently, we only need to prove that any $\beta\in \mathbb{F}^{p\times n}$ such that \Qian{its vectorized version $\boldsymbol{\beta}\in \mathbb{F}^{pn}$ satisfies }  $\boldsymbol{\beta}\cdot \boldsymbol{b}_i=0$ for any $i\in\mathcal{K}^{-}$ must be zero. \Qian{For brevity, we let $\mathbb{B}$ denotes the subspace that contains all values of $\beta$ satisfying this condition. }
	
	 %we want to prove $\beta=0$.
    \Qian{%Consider any such $\beta$. 
    To prove this statement, we first  construct a list of matrices as follows, denoted by $\{{\alpha}_i\}_{i \in \mathcal{B}}$}. Recall that  $\{\boldsymbol{a}_i\}_{i \in \mathcal{B}}$ forms a basis. We can find a matrix ${\alpha}_i\in\mathbb{F}^{p\times m}$ for each ${i \in \mathcal{B}}$ such that \Qian{their vectorized version $\{\boldsymbol{\alpha}_i\}_{i \in \mathcal{B}}$ satisfies} $\boldsymbol{\alpha}_i\cdot \boldsymbol{a}_{i'}=\delta_{i,i'}$.\footnote{Here $\delta_{i,j}$ denotes the discrete delta function, i.e., $\delta_{i,i}=1$, and $\delta_{i,j}=0$ for $i\neq j$.} From elementary linear algebra, the vectors $\{\boldsymbol{\alpha}_i\}_{i \in \mathcal{B}}$ also form a basis of $\mathbb{F}^{pm}$. Correspondingly, their matrix version  $\{{\alpha}_i\}_{i \in \mathcal{B}}$ form a basis of $\mathbb{F}^{p\times m}$.
	
	For any $k\in\mathcal{B}$, we define $\mathcal{K}_k=\mathcal{K}^{+}\backslash\{k\}$. Note that $|\mathcal{K}_k|=K^*_{\textup{linear}}$, equation (\ref{eq:salbe}) should also hold for $\mathcal{K}_k$ \Qian{instead of $\mathcal{K}$.} % for any possible values of $\alpha$ and $\beta$}.  %\Qian{Consider any $\beta$ which satisfies $\boldsymbol{\beta}\cdot \boldsymbol{b}_i=0$ for any $i\in\mathcal{K}^{-}$. If we fix }
\Qian{Moreover,} note that if \Qian{we fix} $\alpha = \alpha_k$, then the corresponding LHS of (\ref{eq:salbe}) remains fixed for any \Qian{$\beta\in \mathbb{B}$}. As a result, %the RHS of (\ref{eq:salbe})
	$A^\intercal B$ must also be fixed. Similar to the above discussion, this requires that the value of   $\alpha_k^{\intercal}\beta$ be fixed. This value has to be $0$ because $\beta=0$ satisfies our stated condition. 
	
	Now we have proved that any $\beta\in\mathbb{B}$ %satisfying $\boldsymbol{\beta}\cdot \boldsymbol{b}_i=0$
	must also satisfy $\alpha_k^{\intercal}\beta=0$ for any $k\in\mathcal{B}$. Because $\{\alpha_k\}_{k\in\mathcal{B}}$ form a basis of $\mathbb{F}^{p\times m}$, such  $\beta$ acting on $\mathbb{F}^{p\times m}$ through matrix product has to be the zero operator, so $\beta = 0$. As mentioned above, this results in $K^*_{\textup{linear}}\geq pm+pn-1$, which completes the proof of Lemma \ref{lemma:gb} and equation (\ref{eq:tight}).

	%we can find a subset of workers such that all vectors $\boldsymbol{a}_i$ that correspond to workers in that subset forms a basis of $\mathbb{F}^{pm}$. Essentially, we only need to show that if any non-zero $\alpha$,  Fixing a subset $\mathcal{K}$ of $K^*_{\textup{linear}}$ workers, 
	
	%prove  (\ref{bound:gen})  in two steps. 

   %Recall (\ref{eq:alpha}) and (\ref{eq:beta}). It is easy to show that $A^\intercal B$ is non-zero if $\alpha^\intercal \beta\neq 0$.
	
%	Note that for any non-zero matrix $\alpha$, we can find $\beta$ such that $\alpha^\intercal \beta\neq 0$. 

%	Hence, the RHS of equation (\ref{eq:albe}) can take at least two distinct values. Consequently, for any subset $\mathcal{K}$, the matrix $\alpha$ viewed as a vector of dimension $pm$ can not be othorgonal to all vectors in $\{\boldsymbol{a}_i\}_{i\in\mathcal{K}}$. Recall that this conclusion holds for any non-zero $\alpha$. Hence, the vectors in $\{\boldsymbol{a}_i\}_{i\in\mathcal{K}}$ span the entire space, and we can find a basis within this subset. For brevity, let $\mathcal{B}$ be a subset of indices such that $\{\boldsymbol{a}_i\}_{i\in\mathcal{B}}$ forms a basis.

	\end{proof}

	\begin{remark}\label{remark:nonld}
	Note that in the above proof, we never used the condition that the decoding functions are linear. Hence, the converse does not require the linearity of the decoder. This fact will be used  later in our discussion regarding the fault-tolerant computing in Appendix \ref{app:ftc}.
	\end{remark}

% If $K=n_{\textup{A}}m+m-2$, then let $A_0$ be the a non-zero matrix such that workers $0,1,...,mn_{\textup{A}}-2$ stores $0$ for matrix $A$, and let $B_0$ be the value of $B$ such that workers $mn_{\textup{A}}-1,...,mn_{\textup{A}}+m-3$ stores $0$ for matrix $B$. Then $A_0^{\intercal}B_0=0$. 

 %   Let $A_i$ be a non-zero column of $A_0$, we have $A_i^{\intercal}B_0=0$ for any value of $B_0$ that worker $mn_{\textup{A}}-1,...,mn_{\textup{A}}+m-3$ stores $0$ for matrix $B$. Hence, $A_i$ is within the span of the encoding vectors of worker $mn_{\textup{A}}-1,...,mn_{\textup{A}}+m-3$. For the same reason, $A_i$ is within the span of the encoding vectors of any subset of $m-1$ workers of $mn_{\textup{A}}-1,...,mn_{\textup{A}}+m-2$. Hence, the encoding vectors of $mn_{\textup{A}}-1,...,mn_{\textup{A}}+m-2$ spans a space of at most rank $m-1$. Hence, the encoding vectors of all workers has a non-zero common orthogonal space. Contradiction.

		%Interestingly $2$ phase as the number of workers $N$ increases. Phase $1$ no robust straggler tolerance i.e., the master can not protect one failure. Second phase, no linger change 
		
%		remark: no linear dec, relate to app for ftc

\subsection{Information Theoretic Converse for Nonlinear Codes}\label{subsec:it}

 Now  we prove inequality (\ref{bound:it}) through an information theoretic converse bound. Similar to the proof of equation (\ref{eq:tight}), we start by proving a general converse. % then we specialize it to the minimum storage setting and prove the factor of $2$ characterization of $K^*$. We state this converse bound in the following lemma:
      
      	\begin{lemma}\label{lemma:git}
		For a distributed matrix multiplication problem with parameters $p$, $m$, $n$, and $N$, if the base field $\mathbb{F}$ is finite, we have 
		\begin{align}\label{bound:git}
		 K^*\geq \max\{pm,pn\}.
		\end{align}
	\end{lemma}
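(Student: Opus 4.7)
The plan is to derive the lower bound through an information-theoretic cut-set argument that caps the entropy of any $k$ workers' responses by the storage devoted to $A$. By the symmetry between $A$ and $B$ (swap their roles together with $m \leftrightarrow n$), it suffices to establish $K^* \geq pm$; the matching bound $K^* \geq pn$ then follows by relabeling.

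First I would fix $B$ to be a specific matrix for which the linear map $A \mapsto A^\intercal B$ is injective, so that the workers' responses are forced to carry all of $A$. Concretely I would take $B$ of full column rank $s$ (an identity padded by zeros when $t\geq s$; otherwise restrict $A$ to a compatible subspace of rows and pair with an appropriately shaped $B$). With $A$ uniform over $\mathbb{F}^{s \times r}$, this yields $H(C \mid B) = H(A) = sr \log_2 |\mathbb{F}|$.

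Next I would exploit the encoding structure. Once $B$ is fixed, each $\tilde{B}_i = g_i(B)$ is a constant, so $\tilde{C}_i = \tilde{A}_i^\intercal \tilde{B}_i$ is a deterministic function of $\tilde{A}_i = f_i(A)$ alone. Since $\tilde{A}_i \in \mathbb{F}^{(s/p) \times (r/m)}$, this forces
\begin{align*}
    H(\tilde{C}_i \mid B) \leq H(\tilde{A}_i) \leq \tfrac{sr}{pm} \log_2 |\mathbb{F}|.
\end{align*}
By $k$-recoverability and subadditivity, $H(C \mid B) \leq \sum_{i \in \mathcal{K}} H(\tilde{C}_i \mid B) \leq k \cdot \tfrac{sr}{pm} \log_2 |\mathbb{F}|$, and combining with $H(C \mid B) = sr \log_2 |\mathbb{F}|$ gives $k \geq pm$.

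The main obstacle is guaranteeing the first step, i.e., that a $B$ making $A \mapsto A^\intercal B$ injective is available; this is immediate in the natural parameter regime but requires a small adaptation for pathological aspect ratios. Finiteness of $\mathbb{F}$ is essential here, since it lets us replace Shannon entropy by exact preimage counting --- the argument genuinely fails over infinite fields, consistent with the non-linear converse being stated only for finite $\mathbb{F}$. Repeating the whole argument with $A$ and $B$ (and $m$, $n$) swapped delivers $K^* \geq pn$, completing the lemma.
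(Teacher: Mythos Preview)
Your argument is essentially the same information-theoretic cut-set bound as in the paper, and it is correct whenever your ``injectivity'' step goes through, i.e., whenever $t\geq s$ (so that some $B\in\mathbb{F}^{s\times t}$ of full row rank exists and $A\mapsto A^\intercal B$ is injective). In that regime your chain $H(C\mid B)=H(A)\leq \sum_{i\in\mathcal K}H(\tilde C_i\mid B)\leq |\mathcal K|\cdot\frac{sr}{pm}\log_2|\mathbb{F}|$ is exactly what the paper does, just phrased slightly differently.

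The gap is in the ``pathological aspect ratio'' case $t<s$. There is no $B\in\mathbb{F}^{s\times t}$ making $A\mapsto A^\intercal B$ injective, and your proposed adaptation --- restricting $A$ to a subspace of rows --- does not rescue the bound: it cuts $H(A)$ down to $tr\log_2|\mathbb{F}|$, but the cap $H(\tilde A_i)\leq \frac{sr}{pm}\log_2|\mathbb{F}|$ is unchanged (the encoders $f_i$ may be non-linear, so restricting the support of $A$ says nothing about the range of $f_i$). You would only conclude $|\mathcal K|\geq \frac{t}{s}\cdot pm$, which is too weak. The paper sidesteps this by never fixing $B$: it observes that the \emph{function} $C_{\textup{func}}:B\mapsto A^\intercal B$ is recoverable from the functions $\tilde C_{i,\textup{func}}:B\mapsto \tilde A_i^\intercal g_i(B)$, each of which is determined by $\tilde A_i$ alone, and that $C_{\textup{func}}$ always determines $A$ (distinct $A$ give distinct linear operators, regardless of whether $t\geq s$). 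This yields the Markov chain $A\to\{\tilde A_i\}_{i\in\mathcal K}\to\{\tilde C_{i,\textup{func}}\}_{i\in\mathcal K}\to C_{\textup{func}}$ and hence $H(A\mid\{\tilde A_i\}_{i\in\mathcal K})=0$ for all $s,r,t$. Replacing your ``fix $B$'' step with this functional viewpoint closes the gap with no other changes to your argument.
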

      	When $m=1$ or $n=1$, the RHS of inequality (\ref{bound:git}) is greater than $\frac{1}{2}K_{\textup{entangled-poly}}$. Hence inequality (\ref{bound:it}) directly results from Lemma \ref{lemma:git}, which we prove as follows.
      
		\begin{proof}
		Without loss of generality, we assume $m \geq n$, and aim to  prove  $K^*\geq pm$. \Qian{Specifically, we need to show that any computation strategy has a recovery threshold of at least $pm$, for any possible  parameter values. Recall the definition of recovery threshold. It suffices to prove that for any computation strategy $( 
    	{\boldsymbol{f}}
    	,{\boldsymbol{g}}, {\boldsymbol{d}})$} 
		and any subset
		$\mathcal{K}$ of workers, if the master can recover $C$ given results from workers in $\mathcal{K}$ \Qian{(i.e., the decoding function $d_{\mathcal{K}}$ returns $C$ for any possible values of $A$ and $B$)}, then we must have $|\mathcal{K}|\geq pm$.

		Suppose the condition in the above statement holds. Given each input $A$, the workers can compute $\{\tilde{A}_i\}_{i\in\mathcal{K}}$ using the encoding functions.  
		On the other hand, for any \Qian{fixed} possible  value of $B$, the workers can compute $\{\tilde{C}_i\}_{i\in\mathcal{K}}$ based on  $\{\tilde{A}_i\}_{i\in\mathcal{K}}$. %\Qian{Let $\tilde{C}_{i,\textup{func}}$ denote a function parameterized by $A$, such that given any input $B$, it returns the value of $\tilde{C}_i$. Then }
		Hence, \Qian{let $\tilde{C}_{i,\textup{func}}$ be a function that returns $\tilde{C}_{i}$ given $B$ as input, $\{\tilde{C}_{i,\textup{func}}\}_{i\in\mathcal{K}}$ is completely determined by $\{\tilde{A}_i\}_{i\in\mathcal{K}}$, without requiring additional information on the value of $A$.} 
		If we view $A$ as a random variable, 
		%\footnote{\Qian{Precisely, each $\tilde{C}_i$ is interpreted as a random function that conditioned on $A$ and given any input $B$, it returns worker $i$'s output.}}
				we have the following Markov chain:
		\begin{align}
		    A\rightarrow \{\tilde{A}_i\}_{i\in\mathcal{K}}\rightarrow \{\tilde{C}_{i,\textup{func}}\}_{i\in\mathcal{K}}.
		\end{align}
		
		Because the master can decode $C$ as a function of $\{\tilde{C}_i\}_{i\in\mathcal{K}}$, \Qian{if we define $C_{\textup{func}}$ similarly as a function that returns $C$ given $B$ as input, $C_{\textup{func}}$ is also completely determined by $\{\tilde{C}_{i,\textup{func}}\}_{i\in\mathcal{K}}$, with no direct dependency on any other variables.}
		%it can obtain the value of $C$ for any fixed $B$ %as a random function 
		%if %the values of
	%	$\{\tilde{C}_i\}_{i\in\mathcal{K}}$ for all $B\in\mathbb{F}^{s\times t}$ are known.
	Consequently, we have the following extended Markov chain  
			\begin{align}
		    A\rightarrow \{\tilde{A}_i\}_{i\in\mathcal{K}}\rightarrow \{\tilde{C}_i\}_{i\in\mathcal{K}}\rightarrow C_{\textup{func}}.
		\end{align}

	\Qian{Note that by definition, $C_{\textup{func}}$ has to satisfy $C_{\textup{func}}(B)=A^\intercal B$ for any $A\in\mathbb{F}^{s\times r}$ and $B\in\mathbb{F}^{s\times t}$. Hence, $C_{\textup{func}}$ is essentially a linear
	operator uniquely determined by $A$, defined as multiplication by $A^{\intercal}$. Conversely, one can show that distinct values of $A$ leads to distinct operators, which directly follows from the definition of matrix multiplication.  }
	%from $\mathbb{F}^{s\times t}$ to $\mathbb{F}^{r\times t}$, which returns the matrix product of $A^\intercal$ and the input. %map from $\mathbb{F}^{s\times t}$ to $\mathbb{F}^{r\times t}$, which returns the matrix product of $A^\intercal$ and the input. %For brevity, we denote this function by $A_{\textup{op}}$. 
		%Given the basic property of matrix multiplication, there is a natural one-to-one correspondence between $A$ and $A_{\textup{op}}$.
		Therefore, the input matrix $A$ can be exactly determined from $C_{\textup{func}}$, i.e., $\mathrm{H}(A | C_{\textup{func}})=0$. 
				Using the data processing inequality, we have $\mathrm{H}(A |\{\tilde{A}_i\}_{i\in\mathcal{K}})=0$.
				
				Now let $A$ be uniformly randomly sampled from $\mathbb{F}^{s\times r}$, and we have $\mathrm{H}(A)=sr\log_2 |\mathbb{F}|$ bits. On the other hand, each $\tilde{A}_i$ consists of $\frac{sr}{pm}$ elements, which has an entropy of at most $\frac{sr}{pm}\log_2 |\mathbb{F}|$ bits. 
		Consequently, we have 
		\begin{align}
		    |\mathcal{K}|\geq \frac{\mathrm{H}(A)}{\max\limits_{i\in\mathcal{K}} \mathrm{H}(\tilde{A}_i)} \geq pm.
		\end{align}
		This concludes the proof of Lemma \ref{lemma:git} and inequality (\ref{bound:it}).
	\end{proof}

    \section{Factor of $2$ characterization of Optimum Linear Recovery Threshold}\label{sec:2}

%\textcolor{red}{steps1. convert matrix mul to computing element-wise product? }

%\Qian{%todo: 1 We first demonstrate a one-to-one correspondence between linear computation strategies and upper bound constructions for bilinear complexity,
%which enables converting a matrix multiplication problem into computing the element-wise product of two vectors of length $R(p,m,n)$. Then we show that an optimal computation strategy can be developed for this augmented problem, which achieves the stated recovery threshold.  Similarly to this result, factor-of-$2$ characterization can also be obtained for non-linear codes, as discussed in Section \ref{sec:2}. }

      In this section, we provide the proof of Theorem \ref{th:2}. \Qian{Specifically, we need to provide a computation strategy that achieves a recovery threshold of at most $2R(p,m,n)-1$  for all possible values of $p$, $m$, $n$, and $N$, as well as a converse result showing that any linear computation strategy requires at least $N\geq R(p,m,n)$ workers for any $p$, $m$, and $n$.}
      
      %\Rem{Note that the converse bound in this theorem is a natural consequence of Definition \ref{def:bi}.
    %  Hence, we focus on proving the upper bound by developing an achievability scheme. Specifically, we need to provide a computation strategy that achieves a recovery threshold of at most $2R(p,m,n)-1$  for all possible values of $p$, $m$, $n$, and $N$. }

    %  by first providing an achievability scheme that achieves the stated upper bound, then prove its order-optimality through a converse.
    
      %Our proof consists of $2$ steps. In Step $1$, we show that any matrix multiplication is essentially computing the element-wise product of two vectors of length $R(p,m,n)$. 
	%	Then in Step $2$, we show that we can find an optimal computation strategy for this augmented computing task by developing a variation of the entangled polynomial code, which achieves a recovery threshold of $ 2R(p,m,n)-1$. %which encodes any given pair of vectors of length $R(p,m,n)$, only requires a recovery threshold of $ 2R(p,m,n)-1$. 
		
		\Qian{The proof is accomplished in $2$ steps. In Step $1$, we show that any linear code for matrix multiplication is equivalently an upper bound construction of the bilinear complexity $R(p,m,n)$, and vice versa. This result indicates the equality between $R(p,m,n)$ and the minimum required number of workers, which proves the needed converse. It also converts any matrix multiplication into the 
		computation of element-wise products given two vectors of length $R(p,m,n)$. 
		Then in Step $2$, we show that we can find an optimal computation strategy for this augmented computing task. We develop a variation of the entangled polynomial code, which achieves a recovery threshold of $ 2R(p,m,n)-1$.}

		%we show that we can find an optimal computation strategy for this augmented computing task using similar ideas we developed for entangled polynomial code, which only requires a recovery threshold of $ 2R(p,m,n)-1$.
		
		% This computation strategy encodes any given pair of vectors of length $R(p,m,n)$ for $N$ workers, such that from the computing results of any subset of $2R(p,m,n)-1$ workers, the master can recover all $R(p,m,n)$ element-wise products.
		 
      \Qian{For Step $1$, we first formally define upper bound constructions for bilinear complexity. 
      \begin{definition}
      Given parameters $p$, $m$, $n$, an \emph{upper bound construction for bilinear complexity} $R(p,m,n)$ with \emph{rank} $R$ is a tuple of tensors $a\in\mathbb{F}^{R\times p\times m}$, $b\in \mathbb{F}^{R\times p\times n}$, and $c\in \mathbb{F}^{R\times m\times n}$ such that for any matrices $A\in\mathbb{F}^{p\times m}$, $B\in \mathbb{F}^{p\times n}$,
      \begin{align}
    	\sum_{i}c_{ijk}\left(\sum_{j',k'} A_{j'k'} a_{ij'k'}\right) 
	    &\left(\sum_{j'',k''} B_{j''k''} b_{ij''k''}\right)\nonumber\\
	    &
	    = \sum_{\ell} A_{\ell j} B_{\ell k}.%\nonumber\\  \forall\ i\in\{0,1,...,N-1\},
	\end{align}
      \end{definition}}
      \Qian{Recall the definition of linear codes. One can verify that any upper bound construction with rank $R$ is equivalently a linear computing design using $R$ workers when the sizes of input matrices are given by $A\in\mathbb{F}^{p\times m}$, $B\in \mathbb{F}^{p\times n}$. Note that matrix multiplication follows the same rules for any block matrices, this equivalence holds true for any input sizes.\footnote{Rigorously, it also requires the linear independence of the $A_i^\intercal B_j$'s, which can be easily proved.} Specifically, given an upper bound construction $(a,b,c)$ with rank $R$, and for general inputs $A\in\mathbb{F}^{s\times r}$, $B\in \mathbb{F}^{s\times t}$,} any block of the final output $C$ can be computed as 
        \begin{align}\label{eq:bi1}
	    C_{j,k}=\sum_{i}c_{ijk}\tilde{A}^\intercal_{i,\textup{vec}}\tilde{B}_{i,\textup{vec}},
	\end{align}
	where $\tilde{A}_{i,\textup{vec}}$ and $\tilde{B}_{i,\textup{vec}}$ \Qian{are linearly encoded matrices stored by $R$ workers, defined as}
	\begin{align}\label{eq:bi2}
    	\tilde{A}_{i,\textup{vec}}\triangleq\sum_{j,k} A_{j,k} a_{ijk},  \ \ \ \ 
	    \tilde{B}_{i,\textup{vec}}\triangleq\sum_{j,k} B_{j,k} b_{ijk}.
	\end{align}
	\Qian{Conversely, one can also show that any linear code using $N$ workers is equivalently an upper bound construction with rank $N$. This equivalence relationship provides a one-to-one mapping between linear codes and upper bound constructions. }
      
     \Qian{Recall the definition of bilinear complexity (provided in Section \ref{sec:res}), which essentially states that the minimum achievable rank $R$ equals $R(p,m,n)$. We have shown that the minimum number of workers required for any linear code is given by the same quantity, which proves the coverse.  }
     \Qian{In terms of achievability, we have also proved the existence of a linear computing design using $R(p,m,n)$ workers, where the encoding and decoding are characterized by some tensors $a\in\mathbb{F}^{R(p,m,n)\times p\times m}$, $b\in \mathbb{F}^{R(p,m,n)\times p\times n}$, and $c\in \mathbb{F}^{R(p,m,n)\times m\times n}$ satisfying equation (\ref{eq:bidef}), following equations (\ref{eq:bi1}) and (\ref{eq:bi2}). } 
      This achievability scheme essentially converts matrix multiplication into a problem of computing the element-wise product of two ``vectors'' $\tilde{A}_{i,\textup{vec}}$ and $\tilde{B}_{i,\textup{vec}}$, each of length $R(p,m,n)$. Specifically, the master only needs $\tilde{A}^\intercal_{i,\textup{vec}}\tilde{B}_{i,\textup{vec}}$ for decoding the final output.

		Now in Step $2$, we develop the optimal computation strategy for this augmented computation task. Given two arbitrary vectors $\tilde{A}_{i,\textup{vec}}$ and $\tilde{B}_{i,\textup{vec}}$ of length $R(p,m,n)$, we want to achieve a recovery threshold of $2R(p,m,n)-1$ for computing their element-wise product using $N$ workers, each of which can multiply two coded vectors of length $1$. 
    \Qian{As we have explained in Section \ref{sec:general_sub}, a recovery threshold of $N$ is always achievable, so we only need to focus on the scenario where $N\geq 2R(p,m,n)-1$.}

		%This problem can be solved using similar techniques we developed for entangled polynomial code. 
		The main coding idea is to first view the elements in each vector as values of a degree $R(p,m,n)-1$ polynomial at $R(p,m,n)$ different points. Specifically, given $R(p,m,n)$ distinct elements in the field $\mathbb{F}$, denoted by $x_0, x_1, \dots, x_{R(p,m,n)-1}$, we find polynomials $\tilde{f}$ and $\tilde{g}$ of degree $R(p,m,n)-1$,  whose coefficients are matrices, such that
			\begin{align}
	    \tilde{f}(x_i)&=\tilde{A}_{i,\textup{vec}}\\
	    \tilde{g}(x_i)&=\tilde{B}_{i,\textup{vec}}.
	\end{align}
	Recall that we want to recover $\tilde{A}_{i,\textup{vec}}^\intercal\tilde{B}_{i,\textup{vec}}$, which is essentially recovering the values of the degree $2R(p,m,n)-2$ polynomial $\tilde{h}\triangleq\tilde{f}^\intercal\tilde{g}$ at these $R(p,m,n)$ points. 
	Earlier in this paper, we already developed a coding structure that allows us to recover polynomials of this form. We now reuse the idea in this construction. 
	
	Let $y_0$, $y_1$, ..., $y_{N-1}$ be distinct elements of $\mathbb{F}$. We let each worker $i$ store	\begin{align}
    	\tilde{A}_{i}=\tilde{f}(y_i), \\ 
	    \tilde{B}_{i}=\tilde{g}(y_i),
	\end{align}
	which are linear combinations of the input submatrices. More Specifically, %we have (todo,footnote)
	\begin{align}
    	\tilde{A}_{i}&=\sum_j \tilde{A}_{j,\textup{vec}} \cdot \prod_{k\neq {j}}\frac{(y_i-x_k)}{(x_j-x_k)}, \\ 
	    \tilde{B}_{i}&=\sum_j \tilde{B}_{j,\textup{vec}} \cdot  \prod_{k\neq {j}}\frac{(y_i-x_k)}{(x_j-x_k)}.
	\end{align}
	
	After computing the product, each worker essentially evaluates the polynomial $\tilde{h}$ at $y_i$. Hence, from the results of any $2R(p,m,n)-1$ workers, the master can recover $\tilde{h}$, which has degree $2R(p,m,n)-2$, and proceed with decoding the output matrix $C$. This construction achieves a recovery threshold of $2R(p,m,n)-1$, which proves the upper bound in Theorem \ref{th:2}.% The detailed analysis can be found in the long version \cite{long_et}.
	
	%todo
			\begin{remark}
			The computation strategy we developed in Step $2$ provides a tight upper bound on the characterization of the optimum linear recovery threshold for computing element-wise product of two arbitrary vectors using $N$ machines. Its optimality naturally follows from Theorem \ref{th:li}, given that the element-wise product of two vectors contains all the information needed to compute the dot-product, which is a special case of matrix multiplication. We formally state this result in the following corollary.
			\end{remark}
	  \begin{corollary}
	  Consider the problem of computing the element-wise product of two vectors of length $R$ using $N$ workers, each of which can store a linearly coded element of each vector and return their product to the master. The optimum linear recovery threshold, denoted as $K^*_{\textup{e-prod-linear}}$, is given by the following equation:\footnote{Obviously, we need $N\geq R$ to guarantee the existence of a valid computation strategy.}
	  \begin{align}
	      K^*_{\textup{e-prod-linear}}=\min\{N,2R-1\}.
	  \end{align}
      \end{corollary}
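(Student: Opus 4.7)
The plan is to prove the corollary in two parts: the achievability bound $K^*_{\textup{e-prod-linear}} \leq \min\{N, 2R-1\}$ and the matching converse $K^*_{\textup{e-prod-linear}} \geq \min\{N, 2R-1\}$. Both halves follow almost immediately from machinery already in the paper, so the main work is to make the reductions precise rather than to invent new tools.

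For the achievability, when $N < 2R-1$ the trivial strategy of waiting for all $N$ workers realizes recovery threshold $N$, so I only need to handle $N \geq 2R-1$. For this case I would reuse verbatim the construction introduced in Step 2 of the proof of Theorem \ref{th:2}: pick $R$ distinct points $x_0,\ldots,x_{R-1}\in\mathbb{F}$ and take polynomials $\tilde{f},\tilde{g}$ of degree $R-1$ whose coefficients are the (scalar) entries obtained by Lagrange interpolation so that $\tilde{f}(x_i)$ and $\tilde{g}(x_i)$ recover the two input vectors entrywise. Then assign each worker $i$ a distinct evaluation point $y_i$ and have it store $\tilde{A}_i=\tilde{f}(y_i)$ and $\tilde{B}_i=\tilde{g}(y_i)$. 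Each worker returns one evaluation of $\tilde{h}\triangleq\tilde{f}\tilde{g}$, a polynomial of degree $2R-2$, so any $2R-1$ responses determine $\tilde{h}$ via polynomial interpolation, and in particular yield its values at $x_0,\ldots,x_{R-1}$, which are exactly the element-wise products. This gives threshold $2R-1$.

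For the converse, I would reduce element-wise product to a dot-product instance of matrix multiplication and invoke Theorem \ref{th:li}. Specifically, any linear code that recovers the element-wise product of two vectors of length $R$ with threshold $K$ can, by an additional linear post-processing at the master, recover the single scalar dot product, which is exactly the problem of computing $A^\intercal B$ for $A,B\in\mathbb{F}^{R\times 1}$, i.e.\ matrix multiplication with $p=R$ and $m=n=1$. Under these parameters Theorem \ref{th:li} equation \eqref{eq:tight} gives $K^*_{\textup{linear}} = K_{\textup{entangled-poly}} = pmn+p-1 = 2R-1$, so whenever $N\geq 2R-1$ we must have $K\geq 2R-1$; for $N<2R-1$ the bound $K\geq N$ is immediate. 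Combining yields $K\geq \min\{N,2R-1\}$.

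The only subtlety, and what I expect to be the mildest obstacle rather than a real difficulty, is checking that the reduction in the converse preserves the class of linear codes. The element-wise-product encoding is assumed linear by hypothesis, and the extra step in recovering the dot product is simply summation of the outputs, which is linear; moreover, as noted in Remark \ref{remark:nonld}, the converse in Theorem \ref{th:li} does not even require the decoder to be linear, so the bound transfers cleanly. With these observations in place the two bounds meet at $\min\{N,2R-1\}$, completing the proof.
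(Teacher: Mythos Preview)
Your proposal is correct and follows essentially the same approach as the paper: achievability is the Step~2 polynomial-evaluation construction from the proof of Theorem~\ref{th:2}, and the converse reduces element-wise product to the dot-product instance ($p=R$, $m=n=1$) of matrix multiplication and invokes Theorem~\ref{th:li}. Your discussion of the linearity of the reduction and the appeal to Remark~\ref{remark:nonld} makes explicit a point the paper only gestures at.
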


		\begin{remark}
      Note that Step $2$ of this proof does not require the computation strategy to be linear. 
      Hence, using exactly the same coding approach, we can easily extend this result to non-linear codes, and prove a similar factor-of-$2$ characterization for the optimum recovery threshold $K^*$, formally stated in the following corollary.
      \end{remark}
      
      \begin{corollary}
      For a distributed matrix multiplication problem with parameters $p$, $m$, and $n$, let $N^*(p,m,n)$ denotes the minimum number of workers such that a valid (possibly non-linear) computation strategy exists. Then for \emph{all} possible values of $N$, we have 
       \begin{align}
	     N^*(p,m,n) \leq K^*\leq 2N^*(p,m,n)-1.
	  \end{align}
      \end{corollary}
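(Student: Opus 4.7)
The plan is to establish the two inequalities separately, reusing the polynomial interpolation idea from Step $2$ of the proof of Theorem \ref{th:2}. The lower bound $N^*(p,m,n) \leq K^*$ follows immediately from definitions: given any valid computation strategy achieving recovery threshold $K^*$, restricting attention to any subset of $K^*$ workers yields a valid (possibly non-straggler-tolerant) strategy with $K^*$ workers. By definition of $N^*(p,m,n)$ as the minimum such worker count, $K^* \geq N^*(p,m,n)$.

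For the upper bound, I would split into two cases. When $N^*(p,m,n) \leq N \leq 2N^*(p,m,n)-1$, the bound is trivial since $K^* \leq N \leq 2N^*(p,m,n)-1$, using that any valid computation strategy over $N$ workers has recovery threshold at most $N$. The substantive case is $N \geq 2N^*(p,m,n)-1$, where I would explicitly construct a computation strategy of recovery threshold $2N^*(p,m,n)-1$ by applying a Lagrange interpolation lift to a fixed minimal strategy.

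Concretely, fix any valid (possibly non-linear) computation strategy $(\boldsymbol{f}^*, \boldsymbol{g}^*, \boldsymbol{d}^*)$ using exactly $N^*(p,m,n)$ workers. Choose distinct elements $x_0, \dots, x_{N^*(p,m,n)-1}$ and $y_0, \dots, y_{N-1}$ in $\mathbb{F}$, and define matrix-coefficient polynomials $\tilde{f}(x)$ and $\tilde{g}(x)$ of degree at most $N^*(p,m,n)-1$ via Lagrange interpolation through the pairs $(x_i, f^*_i(A))$ and $(x_i, g^*_i(B))$. Each new worker $j \in \{0, \dots, N-1\}$ stores $\tilde{A}_j \triangleq \tilde{f}(y_j)$ and $\tilde{B}_j \triangleq \tilde{g}(y_j)$, which are well-defined (though possibly non-linear) encoding functions of the original inputs. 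The worker returns $\tilde{C}_j = \tilde{f}(y_j)^{\intercal}\tilde{g}(y_j) = \tilde{h}(y_j)$, where $\tilde{h}(x) \triangleq \tilde{f}(x)^{\intercal}\tilde{g}(x)$ is a matrix-coefficient polynomial of degree at most $2N^*(p,m,n)-2$. Given any $2N^*(p,m,n)-1$ such evaluations, the master interpolates $\tilde{h}$, reads off $\tilde{h}(x_i) = f^*_i(A)^{\intercal}g^*_i(B)$ for each $i$, and finally applies $d^*_{\{0,\dots,N^*(p,m,n)-1\}}$ to recover $C$.

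I expect the proof to be routine, since the construction and analysis mirror Step $2$ of Theorem \ref{th:2} almost verbatim. The only point worth spelling out is why non-linearity of $\boldsymbol{f}^*, \boldsymbol{g}^*$ does not obstruct the argument: Lagrange interpolation and polynomial multiplication act purely on the matrix values $f^*_i(A), g^*_i(B)$ inside the field $\mathbb{F}$, so the composed encodings remain valid arbitrary functions of the inputs, and the degree of $\tilde{h}$ is determined solely by the degrees of $\tilde{f}$ and $\tilde{g}$. Hence no new algebraic difficulty arises relative to the linear setting, and the main obstacle is essentially bookkeeping, namely verifying that the lifted encodings and the composed decoder fit the formal definition of a computation strategy.
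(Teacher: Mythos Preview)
Your proposal is correct and follows essentially the same approach as the paper: the paper's justification for this corollary is precisely the remark that Step~2 of the proof of Theorem~\ref{th:2} (the Lagrange-interpolation lift) does not require the underlying encoding to be linear, together with the obvious restriction argument for the lower bound. Your write-up simply spells out those details explicitly, including the trivial case $N<2N^*(p,m,n)-1$, and identifies correctly that non-linearity of $\boldsymbol{f}^*,\boldsymbol{g}^*$ is harmless because the interpolation acts on their matrix outputs rather than on the inputs.
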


      	\begin{remark}
      \Qian{%Finally, note that the validity of the computing design  provided in this section does not rely on the fact that $R(p,m,n)$ is the minimum rank. Hence, it applies to any upper bound construction with rank $R$, achieving a recovery threshold of $2R-1$.
      Finally, note that the computing design provided in this section can be applied any upper bound construction with rank $R$, achieving a recovery threshold of $2R-1$,
      its significance is two-fold. Using constructions that achieves bilinear complexity, it proves the existence of a factor-of-$2$ optimal computing scheme, which achieves the same recovery threshold while tolerating arbitrarily many stragglers. On the other hand, for cases where $R(p,m,n)$ is not yet known, explicit coding constructions can still be obtained (e.g., using the well know Strassen's result \cite{Strassen1969}, as well as any other known constructions, such as ones presented in \cite{4567976,doi:10.1137/0120004,laderman1976noncommutative,Bini1980,Sch_nhage_1981,doi:10.1137/0211020,Coppersmith:1981:ACM:1398510.1382702,Strassen:1986:AST:1382439.1382931,COPPERSMITH1990251, stothers2010complexity,Drevet2011,Williams12multiplyingmatrices,Smirnov2013,sedoglavic2017non,sedoglavic:hal-01572046}), which enables further improvements upon the basic entangled polynomial code.}
      \end{remark}
      
      \subsection{Computational complexities}
      
      \Qian{Algorithmically, decoding the improved version of entangled polynomial code can be completed in two steps. In step $1$, the master can first recover the element-wise products $\{\tilde{A}_{i,\textup{vec}}^\intercal\tilde{B}_{i,\textup{vec}}\}_{i=1}^{R(p,m,n)}$, by Lagrange-interpolating a degree $2R(p,m,n)-1$ polynomial at $R(p,m,n)$ points, for $\frac{rt}{mn}$ times. Similar to the entangled polynomial code, it requires a complexity of at most $O(  \frac{rt}{mn} R(p,m,n) \log^2 (R(p,m,n)) \log\log (R(p,m,n)))$, which is almost linear to the input size of the decoder ($\Theta(\frac{rt}{mn}R(p,m,n))$ elements). Then in Step $2$, the master can recover the final results by linearly combining these products, following equation 
      (\ref{eq:bi1}). Note that without even exploiting any algebraic properties of the tensor construction, the natural computing approach achieves a complexity of $\Theta({rt}R(p,m,n))$ for computing the second step. %, as well as achieving the same overal decoding complexity.
      This already achieves a strictly smaller decoding complexity compared with a general linear computing design, which could requires inverting an $R(p,m,n)$-by-$R(p,m,n)$ matrix.\footnote{Similar to matrix multiplication, inverting a $k$-by-$k$ matrix requires a complexity of $O(k^3)$. Faster algorithms has been developed, however, all known results requires super-quadratic complexity.   }}
      
     \Qian{ Moreover, note that most commonly used upper bound constructions are based on the sub-multiplicativity of $R(p,m,n)$, further improved decoding algorithms can be designed when these constructions are used instead. As an example, consider Strassen's construction, which achieves a rank of $R=7^k\geq R(2^k,2^k,2^k)$. The final outputs can essentially be recovered 
     given the intermediate products $\{\tilde{A}_{i,\textup{vec}}^\intercal\tilde{B}_{i,\textup{vec}}\}_{i=1}^{R}$ by following the last few iterations of Strassen's Algorithm, requiring only a linear complexity $\Theta(\frac{rt}{mn}R)$.  This approach achieves an overall decoding complexity of   $O(  \frac{rt}{mn} R\log^2 R \log\log R)$, which is almost linear to the input size of the decoder.} 
      
	   	\Qian{Similar to the discussion in Section 	\ref{sebsec:e}, the computational complexity at each worker is $O(\frac{srt}{pmn})$, which is independent of the coding design. Hence, the improved version of the entangled polynomial code also does not require extra computation at the workers, and the decoding overhead becomes negligible when sizes of the coded submatrices are sufficiently large. Improved performances can also be obtained for systems that requires online encoding, following similar approaches used in decoding.  } 
	   	
	   	% For instance, if the input matrices are broadcast to the workers and are encoded distributedly, the linearity of entangled polynomial code allows for an in-place algorithm, which does not require addition storage or time complexity. Alternatively, if centralized encoding is required, almost-linear-time algorithms can also be developed similar to decoding: at most a complexity of $O((\frac{sr}{pm}\log^2(pm)\log\log(pm)+\frac{st}{pn}\log^2(pn)\log\log(pn))N)$ is required using fast polynomial evaluation, which is almost linear with respect to the output size of the encoder ($\Theta((\frac{sr}{pm}+\frac{st}{pn})N)$ elements).     
      %In this paper, we considered a model where the input matrices are encoded and stored prior to the computation. 
      %}

      \section{Concluding Remarks}

      In this paper, we studied the coded distributed matrix multiplication problem and proposed entangled polynomial codes, which allows optimal straggler mitigation and orderwise improves upon the prior arts. Based on our proposed coding idea, we proved a fundamental connection between the optimum linear recovery threshold and the bilinear complexity, which characterizes the optimum linear recovery threshold within a factor of $2$ for all possible parameter values. 
      The techniques developed in this paper can be directly applied to many other problems, including coded convolution and fault-tolerant computing, providing matching characterizations. %coding gain
      By directly extending entangled polynomial codes to  secure \cite{chang2018capacity, 8382305, DBLP:journals/corr/abs-1810-13006, DBLP:journals/corr/abs-1812-09962, 8613446, DBLP:journals/corr/abs-1901-07705, kim2019private, 8675905 ,chang2019upload, 8761275, nodehi2019secure, jia2019capacity, kakar2019uplinkdownlink, aliasgari2019private, d2019degree}, private \cite{kim2019private,chang2019upload, aliasgari2019private, 8832193}, and batch \cite{jia2019capacity, jia2019cross, jia2019generalized} distributed matrix multiplication, we can also unboundedly improve all other block-partitioning based schemes % proposed in related works 
 \cite{DBLP:journals/corr/abs-1901-07705, aliasgari2019private, 8613446, jia2019cross, jia2019generalized}, achieving subcubic recovery threshold while enabling flexible resource tradeoffs.\footnote{For details, see \cite{yu2020entangled}.} Entangled polynomial codes has also inspired recent development of coded computing schemes for general polynomial computations \cite{pmlr-v89-yu19b}, secure/private computing \cite{so2019codedprivateml}, and secure sharding in blockchain systems \cite{li2018polyshard}.

       One interesting follow-up direction is to find better characterization of the optimum linear recovery threshold.  Although this problem is completely solved for cases including $m=1$, $n=1$, or $p=1$, there is room for improvement in general cases.
            Another interesting question is whether there exist non-linear coding strategies that strictly out-perform linear codes, especially for the important case where the input matrices are large ($s,r,t \gg p,m,n$), while allowing for efficient decoding algorithms with almost linear complexity. 
          %  that scales almost linearly with respect to the input size of the decoder.
          %  and allows efficient decoding with complexity scales linearly with respect $$. 
            %This is possibly happening because if we consider a slightly more general setting where the base algebra for matrix multiplication is $\mathbb{R_+}$ in stead of a field, non-linear code can be find that achieves a recovery threshold strictly smaller than the optimum linear recovery threshold. %For example, for the vector dot product special case i.e., $s=p$, $r=t=m=n=1$, let each worker stores the exponential of a random linear combination of the log of the elements and compute the product (i.e.,$\prod_{i}a_i^{\theta_{i,k}}$ and $\prod_{i}b_i^{\theta_{i,k}}$), they essentially evaluates the exponential of random linear combination of $\ln (a_ib_i)$. Consequently with high probability, this forms an MDS coding structure and when the worker receives results from any subset of $p$ workers, it can recover the dot product. Hence, we show that in this example non-linear code can achieve a recovery threshold of $p$ where linear code can do  
                         %It is possible that using non-linear code, a smaller recovery threshold can be achieved (even go below bilinear complexity). 
      %For example, in a slightly more general setting, where the base algebra is $\mathbb{R_+}$ in stead of a field, and if xx==xx. This recovery threshold is smaller.
      %However, when we consider non-linear codes. two challenges 
      Finally, the main focus of this paper is to provide optimal algorithmic solutions for matrix multiplication on general fields. Although, when the base field is infinite, one can instead embed the computation into finite fields to avoid practical issues such as numerical error and computation overheads (see discussions in \cite{NIPS2017_7027, 8849245}). It is an interesting following direction to find new quantization and computation schemes to study optimal tradeoffs between these measures.

 \appendices

%\Qian{\section{Trading-off system resources using flexible partitioning  }\label{app:trade}}

\section{The Optimum Linear Recovery Threshold for Coded Convolution}\label{app:conv}

In this appendix, we first provide the problem formulation for coded convolution, then we prove Theorem \ref{thm:conv}, which shows the optimality of Polynomial Code for Coded Convolution. 

\subsection{System Model and Problem Formulation}

Consider a convolution task with two input vectors 
		\begin{align}
	    \boldsymbol{a}=[\boldsymbol{a}_0~\boldsymbol{a}_1~...~\boldsymbol{a}_{m-1}],\ \ \ \ \ \ \ \  
	    \boldsymbol{b}=[\boldsymbol{b}_0~\boldsymbol{b}_1~...~\boldsymbol{b}_{n-1}],
	\end{align}
	where all $\boldsymbol{a}_i$'s and $\boldsymbol{b}_i$'s are vectors of length $s$ over a sufficiently large field $\mathbb{F}$.
	We want to compute  $\boldsymbol{c}\triangleq\boldsymbol{a} * \boldsymbol{b}$	
	 using a master and $N$ workers. Each worker can store two vectors of length $s$, which are functions of $\boldsymbol{a}$ and $\boldsymbol{b}$ respectively. We refer to these functions as the \emph{encoding functions}, denoted by $	({\boldsymbol{f}}
    	,{\boldsymbol{g}})$ similar to the matrix multiplication problem.

	 Each worker computes the convolution of its stored vectors, and returns it to the master. The master only waits for the fastest subset of workers, before proceeding to decode $\boldsymbol{c}$. 
	 Similar to the matrix multiplication problem, we define the \emph{recovery threshold} given the \emph{encoding functions}, denoted by $K( {\boldsymbol{f}} ,{\boldsymbol{g}})$, as the minimum number of workers that the master needs to wait that guarantees the existence of valid decoding functions. We aim to characterize the optimum recovery threshold achievable by any linear encoding functions, denoted by $K^*_{\textup{conv-linear}}$, and identify an optimal computation strategy that achieves this optimum threshold. 
	 
	 \subsection{Proof of Theorem \ref{thm:conv}}
	 
	 Now we prove Theorem \ref{thm:conv}, which completely solves the above problem. As we have shown in \cite{NIPS2017_7027}, the recovery threshold stated in Theorem \ref{thm:conv} is achievable using a variation of polynomial code. This result proves an upperbound of $K^*_{\textup{conv-linear}}$. It also identifies an optimal computation strategy. Hence, in this section we focus on proving the matching converse.
	 
	 Specifically, we aim to prove that given any problem parameters $m$, $n$, and $N$, for any computation strategy, if the encoding functions $	({\boldsymbol{f}},{\boldsymbol{g}})$ are linear, then its recovery threshold is at least $m+n-1$. We prove it by contradiction. 
	 
	 Assume the opposite, then the master can recover $\boldsymbol{c}$ using results from a subset of at most $m+n-2$ workers. We denote this subset by $\mathcal{K}$.
	 Obviously, we can find a partition of $\mathcal{K}$ into two subsets, denoted by $\mathcal{K}_{\textup{a}}$ and $\mathcal{K}_{\textup{b}}$, such that $|\mathcal{K}_{\textup{a}}|\leq m-1$ and $|\mathcal{K}_{\textup{b}}|\leq n-1$. Note that the encoding functions of workers in  $\mathcal{K}_{\textup{a}}$ collaboratively and linearly maps $\mathbb{F}^{ms}$ to $\mathbb{F}^{(m-1)s}$, which has a non-zero kernel. Hence, we can find a non-zero input vector $\boldsymbol{a}$ such that all workers in $\mathcal{K}_{\textup{a}}$ returns $0$. Similarly, we can find a non-zero $\boldsymbol{b}$ such that all workers in $\mathcal{K}_{\textup{b}}$ returns $0$. Recall that $ \mathcal{K}_{\textup{a}}\cup \mathcal{K}_{\textup{b}}=\mathcal{K}$. Consequently, when the master receives $0$ from all workers in $\mathcal{K}$, the decoding function returns $\boldsymbol{a} * \boldsymbol{b}$.  
	 
	 This convolution product must be the $\boldsymbol{0}$ vector, given that the workers return the same results under zero inputs. However, note that the convolution operator has no zero-divisor. Either $\boldsymbol{a}$ or $\boldsymbol{b}$ has to be zero, which contradicts the non-zero assumptions. Hence, we have $K( {\boldsymbol{f}} ,{\boldsymbol{g}})\geq m+n-1$. This concludes the proof of Theorem \ref{thm:conv}.

\section{An Equivalence Between Fault Tolerance and Straggler Mitigation}\label{app:ftc}

In this appendix, we start by formulating a fault-tolerant computing problem for matrix multiplication, then we prove Theorem \ref{thm:ftc} by building a connection between straggler mitigation and fault tolerance, by extending the concept of Hamming distance to coded computing.

\subsection{Problem Formulation}

We consider a matrix multiplication problem with two input matrices $ A \in \mathbb{F}^{s\times r}$ and $ B \in \mathbb{F}^{s\times t}$, and we are interested in computing $ C \triangleq A ^\intercal B $ using a master node and $N$ worker nodes, where
		each worker can store $\frac{1}{pm}$ fraction of $A$ and $\frac{1}{pn}$ fraction of $B$.
		Similar to the straggler mitigation problem, each worker $i$ can store two coded matrices $\tilde{A}_i\in\mathbb{F}^{\frac{s}{p}\times \frac{r}{m}}$ and $\tilde{B}_i\in\mathbb{F}^{\frac{s}{p}\times \frac{t}{n}}$, computed based on %\emph{arbitrary functions} of
		$ A $ and $ B $ respectively. 
		Each worker can compute the product  $\tilde{C}_i\triangleq\tilde{A}_i^\intercal\tilde{B}_i$, and return it to the master. 
	    Unlike the straggler setting, the master waits for all workers before proceeding to recover the final output $C$. However, a subset of workers can return error results, and the master has no information on which subset of results are false.
	    Under this setting, the master wants to: (1) determine if there is an error in the workers' outputs, and (2) try to recover the final output $C$ using the possibly false computing results from the workers.

	   %todo:fix 
		Given the above system model, we formulate this fault-tolerant computing problem based on the following terminology. Similar to our main problem in this paper, we define the \emph{encoding functions} and denote them by $({\boldsymbol{f}},{\boldsymbol{g}})$. We also define the \emph{decoding function} for the master, however in this problem it can either return an estimate of $C$, or report an error. We only consider the \emph{valid} decoding functions, which always correctly decodes $C$ when no worker is making mistakes. 
		
		For any integer $E$, we say the encoding functions can \emph{detect} $E$ errors if we can find a decoding function that
		either returns the correct value of $C$ or reports an error, when no more than $E$ workers are making mistakes. 
		Moreover, we say the encoding functions can \emph{correct} $E$ errors, if the decoding function always correctly decodes $C$. % do we need this? if no more than $E$ workers returns false results,
		%can always correctly tell if there is an error in workers' outputs. 
	    %we say the encoding functions can \emph{correct} $E$ errors. %if the master can always decode the final output $C$ correctly. 
	    We denote the maximum possible integer $E$ given these two criteria by $E_{\textup{detect}}({\boldsymbol{f}},{\boldsymbol{g}})$ and $E_{\textup{correct}}({\boldsymbol{f}},{\boldsymbol{g}})$ respectively. 
		
			We aim to find encoding functions that allows detecting and correcting the maximum possible number of errors. Among all possible computation strategies, we are particularly interested in \emph{linear encoding functions}, as defined in Section \ref{sec:sys}. Given the above terminology, we define the following concepts.
    		\begin{definition}
    	For a distributed matrix multiplication problem of computing $A ^\intercal B $ using $N$ workers, we define the \emph{maximum detectable errors} and the \emph{maximum detectable errors}, denoted by $E^*_{\textup{detect}}$ and $E^*_{\textup{correct}}$ respectively, as the maximum possible values of $E_{\textup{detect}}({\boldsymbol{f}},{\boldsymbol{g}})$ and $E_{\textup{correct}}({\boldsymbol{f}},{\boldsymbol{g}})$ over the set of all encoding functions that are linear.
    	\end{definition}
    	
     	Our goal is to characterize the values of $E^*_{\textup{detect}}$ and $E^*_{\textup{correct}}$,  and to find optimal computation strategies to achieve these values. We are also interested in extending these characterizations to non-linear codes.

		\subsection{Proof of Theorem \ref{thm:ftc}}
		
		We start by defining some concepts, which allows connecting the fault-tolerant computing problem to the straggler mitigation problem.
			\begin{definition}
        We define the \emph{Hamming distance} of any encoding functions $({\boldsymbol{f}},{\boldsymbol{g}})$, denoted by $d({\boldsymbol{f}},{\boldsymbol{g}})$, as the maximum integer $d$ such that for any two pairs of input matrices whose products $C$ are different, at least $d$ workers compute different values of $ \tilde{C}_i$.   
    	\end{definition}
    			\begin{definition}
        We define the \emph{Recovery threshold} of any encoding functions $({\boldsymbol{f}},{\boldsymbol{g}})$, denoted by $K({\boldsymbol{f}},{\boldsymbol{g}})$, as the minimum possible recovery threshold given any decoding functions.
    	\end{definition}
		
        We prove that all these three mentioned criteria for designing encoding functions are directly connected by the Hamming distance, which is formally stated as follows.
        
        \begin{lemma}\label{lemma:hamming}
        For any (possibly non-linear) computation strategy,  we have
        \begin{align}
            K({\boldsymbol{f}},{\boldsymbol{g}})&=N-d({\boldsymbol{f}},{\boldsymbol{g}})+1,\label{eq:rec}\\
            E_{\textup{detect}}({\boldsymbol{f}},{\boldsymbol{g}})&=d({\boldsymbol{f}},{\boldsymbol{g}})-1,\label{eq:det}\\
            E_{\textup{correct}}({\boldsymbol{f}},{\boldsymbol{g}})&=\left\lfloor\frac{d({\boldsymbol{f}},{\boldsymbol{g}})-1}{2}\right\rfloor\label{eq:cor}.
        \end{align}
        \end{lemma}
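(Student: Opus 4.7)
The plan is to prove Lemma \ref{lemma:hamming} as the coded-computing analog of three well-known algebraic coding theory facts: the minimum Hamming distance of a code simultaneously controls the maximum number of erasures that can be recovered, the maximum number of errors that can be detected, and the maximum number that can be corrected. The correct viewpoint is to treat the worker-output tuple $(\tilde C_0,\ldots,\tilde C_{N-1})$ as a ``codeword'' indexed by the product $C=A^{\intercal}B$, with two tuples counted as distinct only when their $C$-values differ. This quotient is exactly what is baked into the definition of $d(\boldsymbol{f},\boldsymbol{g})$, so the classical erasure and error-pattern arguments transfer once this is pinned down. I would establish each of (\ref{eq:rec})--(\ref{eq:cor}) in turn, each time pairing a decoder-based achievability with an adversarial converse.

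For (\ref{eq:rec}), achievability uses the natural consistency decoder: for any $\mathcal{K}$ with $|\mathcal{K}|=N-d(\boldsymbol{f},\boldsymbol{g})+1$, any two inputs with different $C$ disagree on at least $d(\boldsymbol{f},\boldsymbol{g})$ positions, and since $|\mathcal{K}^{c}|=d(\boldsymbol{f},\boldsymbol{g})-1$ is one short of this, they cannot agree throughout $\mathcal{K}$; hence $\{\tilde C_i\}_{i\in\mathcal{K}}$ uniquely determines $C$. The converse picks a pair that realizes the minimum distance and takes $\mathcal{K}$ to avoid all $d(\boldsymbol{f},\boldsymbol{g})$ disagreement positions, exhibiting an indistinguishable pair on only $N-d(\boldsymbol{f},\boldsymbol{g})$ workers. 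For (\ref{eq:det}), achievability uses a decoder that either outputs $C$ when the received tuple is realizable or reports an error otherwise: with at most $d(\boldsymbol{f},\boldsymbol{g})-1$ corrupted coordinates, the received tuple cannot be the worker output of any input with a different $C$, since that would force two different-$C$ tuples to agree on more than $N-d(\boldsymbol{f},\boldsymbol{g})+1$ positions, contradicting the definition of $d(\boldsymbol{f},\boldsymbol{g})$. Conversely, $d(\boldsymbol{f},\boldsymbol{g})$ adversarial errors can carry the output of one input exactly onto the output of another with different $C$, defeating any detector.

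For (\ref{eq:cor}), achievability is the standard sphere-packing argument in Hamming distance: with at most $\lfloor(d(\boldsymbol{f},\boldsymbol{g})-1)/2\rfloor$ errors, the triangle inequality places the received tuple strictly closer to the true worker-output tuple than to any valid tuple corresponding to a different $C$, so a minimum-Hamming-distance decoder succeeds. The converse places the received tuple roughly halfway between two minimum-distance candidates, using $\lceil d(\boldsymbol{f},\boldsymbol{g})/2\rceil=\lfloor(d(\boldsymbol{f},\boldsymbol{g})-1)/2\rfloor+1$ adversarial errors to make the true and alternate candidates indistinguishable under any decoder. The main subtlety, rather than a genuine obstacle, is that the codebook is indexed by $C$ rather than by the input pair $(A,B)$: distinct inputs may share the same product and thus produce different worker outputs associated to the same ``codeword'', and both the Hamming distance definition and the decoding targets must be phrased at the level of $C$. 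Once this is in place, no step uses linearity of $(\boldsymbol{f},\boldsymbol{g})$, so Lemma \ref{lemma:hamming} applies to arbitrary computation strategies and, combined with Lemma \ref{lemma:gb} (via Remark \ref{remark:nonld}), yields Theorem \ref{thm:ftc}.
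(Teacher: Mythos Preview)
Your proposal is correct and follows essentially the same approach as the paper: both treat the tuple of worker outputs as a code indexed by the product $C$ and invoke the classical erasure/detection/correction relations with minimum Hamming distance. The paper's proof is in fact much terser---it only sketches (\ref{eq:rec}) and defers (\ref{eq:det}) and (\ref{eq:cor}) to ``similar approaches''---so your explicit achievability/converse pairs for all three equations, together with the observation about indexing by $C$ rather than $(A,B)$, are a more complete write-up of the same argument.
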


\begin{remark}
Lemma \ref{lemma:hamming} essentially indicates that optimizing the straggler mitigation performance over any class of \Qian{en}coding designs is equivalently optimizing its performance in the fault tolerance setting. Furthermore, all these previously mentioned metrics can be simultaneously optimized by the codes with the maximum possible Hamming distance. Hence, there is no tension among these metrics.  This result bridges the rich literature of coding theory and distributed computing. 
\end{remark}

\begin{remark}
\Qian{In terms of achievability, Lemma \ref{lemma:hamming} also provides a large class of coding designs for fault-tolerant computing. Specifically, it indicates that given any computing scheme (e.g., the entangled polynomial code, or its improved version) that achieves a certain recovery threshold, denoted by $K$. Using the same encoding functions, we can obtain a fault-tolerant scheme that detects up to $N-K$ errors, or correct up to $\lfloor\frac{N-K}{2}\rfloor$ errors.    }
\end{remark}
        
        \begin{proof}[Proof of Lemma \ref{lemma:hamming}]
        
 Lemma \ref{lemma:hamming} is a direct consequence of the classical coding theory, given that mitigating straggler effects is essentially correcting erasure errors, and tolerating false results in computing is essentially correcting arbitrary error. Hence, we only provide the proof of (\ref{eq:rec}), where equations (\ref{eq:det}) and (\ref{eq:cor}) can be proved using similar approaches.         
 
 Specifically, we want to prove that for any integer $K$, a recovery threshold of $K$ is achievable by some encoding functions if and only if their Hamming distance is greater or equal to $N-K+1$. If $K$ is achievable, it means that we can find decoding functions that uniquely determines the value of $C$ given results from any $K$ workers. Equivalently, for distinct values of $C$, at least $N-K+1$ workers has to return distinct results. %This proves that 
  Recall that the recovery threshold is the minimum of such integer $K$, and % , i.e, the Hamming distance is at least $N-K+1$
 the Hamming distance is the maximum integer that corresponds to $N-K+1$. We have $K({\boldsymbol{f}},{\boldsymbol{g}})=N-d({\boldsymbol{f}},{\boldsymbol{g}})+1$.
 
        \end{proof} %Furthermore, this approach can be easily extended to the hybrid settings where both stragglers and computing errors exist, and similar results can be proved. The detailed formulation and proof can be found in Appendix \ref{app:ftc}. 

Now we continue to prove Theorem \ref{thm:ftc} using  Lemma \ref{lemma:hamming}. As mentioned in Remark \ref{remark:nonld}, the proof of Theorem  \ref{th:li} essentially completely characterizes the optimum recovery threshold over all linear encoding functions for $m=1$ or $n=1$, which is given by $K_{\textup{entangled-poly}}$. Hence, using Lemma \ref{lemma:hamming}, we directly obtain that if $m=1$ or $n=1$, we have
		\begin{align}
		 E^*_{\textup{detect}} &={N-K_{\textup{entangled-poly}}},\\
		 E^*_{\textup{correct}}&=\left\lfloor\frac{N-K_{\textup{entangled-poly}}}{2}\right\rfloor.
		\end{align}
		This concludes the proof of Theorem \ref{thm:ftc}.

%\begin{corollary}
%For any computation strategy, 
%hybrid lemma for 

%\end{corollary}

%\begin{corollary}
%For any computation strategy, 
%hybrid thm
%\end{corollary}

	\section*{acknowledgement}
The authors would like to thank Jie Li for careful reading of the manuscript and discussions.

   \bibliographystyle{ieeetr}
   \bibliography{mat}

\begin{thebibliography}{10}

\bibitem{8437563}
Q.~{Yu}, M.~A. {Maddah-Ali}, and A.~S. {Avestimehr}, ``Straggler mitigation in
  distributed matrix multiplication: Fundamental limits and optimal coding,''
  in {\em 2018 IEEE International Symposium on Information Theory (ISIT)},
  pp.~2022--2026, June 2018.

\bibitem{Cannon:1969:CCI:905686}
L.~E. Cannon, {\em A Cellular Computer to Implement the Kalman Filter
  Algorithm}.
\newblock PhD thesis, Bozeman, MT, USA, 1969.
\newblock AAI7010025.

\bibitem{CPE:CPE4330060702}
J.~Choi, D.~W. Walker, and J.~J. Dongarra, ``Pumma: Parallel universal matrix
  multiplication algorithms on distributed memory concurrent computers,'' {\em
  Concurrency: Practice and Experience}, vol.~6, no.~7, pp.~543--570, 1994.

\bibitem{CPE:CPE250}
R.~A. Van De~Geijn and J.~Watts, ``Summa: scalable universal matrix
  multiplication algorithm,'' {\em Concurrency: Practice and Experience},
  vol.~9, no.~4, pp.~255--274, 1997.

\bibitem{Solomonik:2011:CPM:2033408.2033420}
E.~Solomonik and J.~Demmel, ``Communication-optimal parallel 2.5d matrix
  multiplication and lu factorization algorithms,'' in {\em Proceedings of the
  17th International Conference on Parallel Processing - Volume Part II},
  Euro-Par'11, (Berlin, Heidelberg), pp.~90--109, Springer-Verlag, 2011.

\bibitem{dean2013tail}
J.~Dean and L.~A. Barroso, ``The tail at scale,'' {\em Communications of the
  ACM}, vol.~56, no.~2, pp.~74--80, 2013.

\bibitem{zaharia2008improving}
M.~Zaharia, A.~Konwinski, A.~D. Joseph, R.~H. Katz, and I.~Stoica, ``Improving
  {M}ap{R}educe performance in heterogeneous environments,'' {\em OSDI},
  vol.~8, p.~7, Dec. 2008.

\bibitem{lee2015speeding}
K.~Lee, M.~Lam, R.~Pedarsani, D.~Papailiopoulos, and K.~Ramchandran, ``Speeding
  up distributed machine learning using codes,'' {\em e-print
  arXiv:1512.02673}, 2015.

\bibitem{dutta2016short}
S.~Dutta, V.~Cadambe, and P.~Grover, ``Short-dot: Computing large linear
  transforms distributedly using coded short dot products,'' in {\em Advances
  In Neural Information Processing Systems}, pp.~2092--2100, 2016.

\bibitem{tandon2016gradient}
R.~Tandon, Q.~Lei, A.~G. Dimakis, and N.~Karampatziakis, ``Gradient coding,''
  {\em arXiv preprint arXiv:1612.03301}, 2016.

\bibitem{NIPS2017_7027}
Q.~Yu, M.~Maddah-Ali, and S.~Avestimehr, ``Polynomial codes: an optimal design
  for high-dimensional coded matrix multiplication,'' in {\em Advances in
  Neural Information Processing Systems 30}, pp.~4406--4416, Curran Associates,
  Inc., 2017.

\bibitem{li2016unified}
S.~Li, M.~A. Maddah-Ali, and A.~S. Avestimehr, ``A unified coding framework for
  distributed computing with straggling servers,'' {\em arXiv preprint
  arXiv:1609.01690}, 2016.

\bibitem{7901473}
S.~Li, M.~A. Maddah-Ali, and A.~S. Avestimehr, ``Coding for distributed fog
  computing,'' {\em IEEE Communications Magazine}, vol.~55, pp.~34--40, April
  2017.

\bibitem{8663353}
A.~{Reisizadeh}, S.~{Prakash}, R.~{Pedarsani}, and A.~S. {Avestimehr}, ``Coded
  computation over heterogeneous clusters,'' {\em IEEE Transactions on
  Information Theory}, vol.~65, pp.~4227--4242, July 2019.

\bibitem{gs005}
M.~Bl{\"a}ser, {\em Fast Matrix Multiplication}.
\newblock No.~5 in Graduate Surveys, Theory of Computing Library, 2013.

\bibitem{8006960}
S.~Dutta, V.~Cadambe, and P.~Grover, ``Coded convolution for parallel and
  distributed computing within a deadline,'' in {\em 2017 IEEE International
  Symposium on Information Theory (ISIT)}, pp.~2403--2407, June 2017.

\bibitem{FTC}
K.-H. Huang and J.~A. Abraham, ``Algorithm-based fault tolerance for matrix
  operations,'' {\em IEEE Transactions on Computers}, vol.~C-33, pp.~518--528,
  June 1984.

\bibitem{8262882}
M.~Fahim, H.~Jeong, F.~Haddadpour, S.~Dutta, V.~Cadambe, and P.~Grover, ``On
  the optimal recovery threshold of coded matrix multiplication,'' in {\em 2017
  55th Annual Allerton Conference on Communication, Control, and Computing
  (Allerton)}, pp.~1264--1270, Oct 2017.

\bibitem{high_mul}
K.~Lee, C.~Suh, and K.~Ramchandran, ``High-dimensional coded matrix
  multiplication,'' in {\em 2017 IEEE International Symposium on Information
  Theory (ISIT)}, pp.~2418--2422, June 2017.

\bibitem{Strassen1969}
V.~Strassen, ``Gaussian elimination is not optimal,'' {\em Numerische
  Mathematik}, vol.~13, pp.~354--356, Aug 1969.

\bibitem{8437852}
S.~{Dutta}, Z.~{Bai}, H.~{Jeong}, T.~M. {Low}, and P.~{Grover}, ``A unified
  coded deep neural network training strategy based on generalized polydot
  codes,'' in {\em 2018 IEEE International Symposium on Information Theory
  (ISIT)}, pp.~1585--1589, June 2018.

\bibitem{1457803}
J.-Y. Jou and J.~A. Abraham, ``Fault-tolerant matrix arithmetic and signal
  processing on highly concurrent computing structures,'' {\em Proceedings of
  the IEEE}, vol.~74, pp.~732--741, May 1986.

\bibitem{von2013modern}
J.~Von Zur~Gathen and J.~Gerhard, {\em Modern computer algebra}.
\newblock Cambridge university press, 2013.

\bibitem{4567976}
V.~Y. {Pan}, ``Strassen's algorithm is not optimal trilinear technique of
  aggregating, uniting and canceling for constructing fast algorithms for
  matrix operations,'' in {\em 19th Annual Symposium on Foundations of Computer
  Science (sfcs 1978)}, pp.~166--176, Oct 1978.

\bibitem{Bini1980}
D.~Bini, ``Relations between exact and approximate bilinear algorithms.
  applications,'' {\em CALCOLO}, vol.~17, pp.~87--97, Jan 1980.

\bibitem{Sch_nhage_1981}
A.~Schönhage, ``Partial and total matrix multiplication,'' {\em {SIAM} Journal
  on Computing}, vol.~10, pp.~434--455, aug 1981.

\bibitem{doi:10.1137/0211020}
F.~Romani, ``Some properties of disjoint sums of tensors related to matrix
  multiplication,'' {\em SIAM Journal on Computing}, vol.~11, no.~2,
  pp.~263--267, 1982.

\bibitem{Coppersmith:1981:ACM:1398510.1382702}
D.~Coppersmith and S.~Winograd, ``On the asymptotic complexity of matrix
  multiplication,'' in {\em Proceedings of the 22Nd Annual Symposium on
  Foundations of Computer Science}, SFCS '81, (Washington, DC, USA),
  pp.~82--90, IEEE Computer Society, 1981.

\bibitem{Strassen:1986:AST:1382439.1382931}
V.~Strassen, ``The asymptotic spectrum of tensors and the exponent of matrix
  multiplication,'' in {\em Proceedings of the 27th Annual Symposium on
  Foundations of Computer Science}, SFCS '86, (Washington, DC, USA),
  pp.~49--54, IEEE Computer Society, 1986.

\bibitem{COPPERSMITH1990251}
D.~Coppersmith and S.~Winograd, ``Matrix multiplication via arithmetic
  progressions,'' {\em Journal of Symbolic Computation}, vol.~9, no.~3, pp.~251
  -- 280, 1990.
\newblock Computational algebraic complexity editorial.

\bibitem{stothers2010complexity}
A.~J. Stothers, {\em On the complexity of matrix multiplication}.
\newblock PhD thesis, University of Edinburgh, 2010.

\bibitem{Williams12multiplyingmatrices}
V.~V. Williams, ``Multiplying matrices faster than coppersmith-winograd,'' in
  {\em In Proc. 44th ACM Symposium on Theory of Computation}, pp.~887--898,
  2012.

\bibitem{doi:10.1137/0120004}
J.~Hopcroft and L.~Kerr, ``On minimizing the number of multiplications
  necessary for matrix multiplication,'' {\em SIAM Journal on Applied
  Mathematics}, vol.~20, no.~1, pp.~30--36, 1971.

\bibitem{laderman1976noncommutative}
J.~D. Laderman, ``A noncommutative algorithm for multiplying 3$\times$ 3
  matrices using 23 multiplications,'' {\em Bulletin of the American
  Mathematical Society}, vol.~82, no.~1, pp.~126--128, 1976.

\bibitem{Drevet2011}
C.-E. Drevet, M.~Nazrul~Islam, and E.~Schost, ``Optimization techniques for
  small matrix multiplication,'' {\em Theoretical Computer Science}, vol.~412,
  no.~22, pp.~2219--2236, 2011.

\bibitem{Smirnov2013}
A.~V. Smirnov, ``The bilinear complexity and practical algorithms for matrix
  multiplication,'' {\em Computational Mathematics and Mathematical Physics},
  vol.~53, pp.~1781--1795, Dec 2013.

\bibitem{sedoglavic2017non}
A.~Sedoglavic, ``A non-commutative algorithm for multiplying 5$\times$5
  matrices using 99 multiplications,'' {\em arXiv preprint arXiv:1707.06860},
  2017.

\bibitem{sedoglavic:hal-01572046}
A.~Sedoglavic, ``{A non-commutative algorithm for multiplying (7 $\times$ 7)
  matrices using 250 multiplications}.'' working paper or preprint, Aug. 2017.

\bibitem{chang2018capacity}
W.-T. Chang and R.~Tandon, ``On the capacity of secure distributed matrix
  multiplication,'' {\em arXiv preprint arXiv:1806.00469}, 2018.

\bibitem{8382305}
H.~{Yang} and J.~{Lee}, ``Secure distributed computing with straggling servers
  using polynomial codes,'' {\em IEEE Transactions on Information Forensics and
  Security}, vol.~14, pp.~141--150, Jan 2019.

\bibitem{DBLP:journals/corr/abs-1810-13006}
J.~Kakar, S.~Ebadifar, and A.~Sezgin, ``Rate-efficiency and
  straggler-robustness through partition in distributed two-sided secure matrix
  computation,'' {\em arXiv preprint arXiv:1810.13006}, 2018.

\bibitem{DBLP:journals/corr/abs-1812-09962}
R.~G.~L. {D’Oliveira}, S.~{El Rouayheb}, and D.~{Karpuk}, ``Gasp codes for
  secure distributed matrix multiplication,'' in {\em 2019 IEEE International
  Symposium on Information Theory (ISIT)}, pp.~1107--1111, July 2019.

\bibitem{8613446}
H.~A. {Nodehi}, S.~R.~H. {Najarkolaei}, and M.~A. {Maddah-Ali}, ``Entangled
  polynomial coding in limited-sharing multi-party computation,'' in {\em 2018
  IEEE Information Theory Workshop (ITW)}, pp.~1--5, Nov 2018.

\bibitem{DBLP:journals/corr/abs-1901-07705}
M.~Aliasgari, O.~Simeone, and J.~Kliewer, ``Distributed and private coded
  matrix computation with flexible communication load,'' {\em arXiv preprint
  arXiv:1901.07705}, 2019.

\bibitem{kim2019private}
M.~Kim and J.~Lee, ``Private secure coded computation,'' {\em arXiv preprint
  arXiv:1902.00167}, 2019.

\bibitem{8675905}
J.~{Kakar}, S.~{Ebadifar}, and A.~{Sezgin}, ``On the capacity and
  straggler-robustness of distributed secure matrix multiplication,'' {\em IEEE
  Access}, vol.~7, pp.~45783--45799, 2019.

\bibitem{chang2019upload}
W.-T. Chang and R.~Tandon, ``On the upload versus download cost for secure and
  private matrix multiplication,'' {\em arXiv preprint arXiv:1906.10684}, 2019.

\bibitem{8761275}
S.~{Ebadifar}, J.~{Kakar}, and A.~{Sezgin}, ``The need for alignment in
  rate-efficient distributed two-sided secure matrix computation,'' in {\em ICC
  2019 - 2019 IEEE International Conference on Communications (ICC)}, pp.~1--6,
  May 2019.

\bibitem{nodehi2019secure}
H.~A. Nodehi and M.~A. Maddah-Ali, ``Secure coded multi-party computation for
  massive matrix operations,'' {\em arXiv preprint arXiv:1908.04255}, 2019.

\bibitem{jia2019capacity}
Z.~Jia and S.~A. Jafar, ``On the capacity of secure distributed matrix
  multiplication,'' {\em arXiv preprint arXiv:1908.06957}, 2019.

\bibitem{kakar2019uplinkdownlink}
J.~Kakar, A.~Khristoforov, S.~Ebadifar, and A.~Sezgin, ``Uplink-downlink
  tradeoff in secure distributed matrix multiplication,'' {\em arXiv preprint
  arXiv:1910.13849}, 2019.

\bibitem{aliasgari2019private}
M.~Aliasgari, O.~Simeone, and J.~Kliewer, ``Private and secure distributed
  matrix multiplication with flexible communication load,'' {\em arXiv preprint
  arXiv:1909.00407}, 2019.

\bibitem{d2019degree}
R.~G. D’Oliveira, S.~El~Rouayheb, D.~Heinlein, and D.~Karpuk, ``Degree tables
  for secure distributed matrix multiplication,'' 2019.

\bibitem{8832193}
M.~{Kim}, H.~{Yang}, and J.~{Lee}, ``Private coded matrix multiplication,''
  {\em IEEE Transactions on Information Forensics and Security}, pp.~1--1,
  2019.

\bibitem{jia2019cross}
Z.~Jia and S.~A. Jafar, ``Cross subspace alignment codes for coded distributed
  batch matrix multiplication,'' {\em arXiv preprint arXiv:1909.13873}, 2019.

\bibitem{jia2019generalized}
Z.~Jia and S.~A. Jafar, ``Generalized cross subspace alignment codes for coded
  distributed batch matrix multiplication,'' 2019.

\bibitem{yu2020entangled}
Q.~Yu and A.~S. Avestimehr, ``Entangled polynomial codes for secure, private,
  and batch distributed matrix multiplication: Breaking the ``cubic''
  barrier,'' {\em arXiv preprint arXiv:2001.05101, to appear in IEEE ISIT},
  2020.

\bibitem{pmlr-v89-yu19b}
Q.~Yu, S.~Li, N.~Raviv, S.~M.~M. Kalan, M.~Soltanolkotabi, and S.~A.
  Avestimehr, ``Lagrange coded computing: Optimal design for resiliency,
  security, and privacy,'' in {\em Proceedings of Machine Learning Research}
  (K.~Chaudhuri and M.~Sugiyama, eds.), vol.~89 of {\em Proceedings of Machine
  Learning Research}, pp.~1215--1225, PMLR, 16--18 Apr 2019.

\bibitem{so2019codedprivateml}
J.~So, B.~Guler, A.~S. Avestimehr, and P.~Mohassel, ``Codedprivateml: A fast
  and privacy-preserving framework for distributed machine learning,'' {\em
  arXiv preprint arXiv:1902.00641}, 2019.

\bibitem{li2018polyshard}
S.~Li, M.~Yu, S.~Avestimehr, S.~Kannan, and P.~Viswanath, ``Polyshard: Coded
  sharding achieves linearly scaling efficiency and security simultaneously,''
  {\em arXiv preprint arXiv:1809.10361}, 2018.

\bibitem{8849245}
Q.~{Yu} and A.~S. {Avestimehr}, ``Harmonic coding: An optimal linear code for
  privacy-preserving gradient-type computation,'' in {\em 2019 IEEE
  International Symposium on Information Theory (ISIT)}, pp.~1102--1106, July
  2019.

\end{thebibliography}

\end{document}